\documentclass[conference]{IEEEtran}
\IEEEoverridecommandlockouts                              % This command is only needed if

%\overrideIEEEmargins 

\usepackage{amssymb}
\usepackage{graphicx}
\usepackage{epstopdf}
\usepackage{amsmath}
\usepackage{mathtools, cuted}
\usepackage{multirow}
\usepackage[caption=false,listofformat=subsimple,labelformat=simple]{subfig}

\usepackage{array}
\usepackage{bm}
\usepackage{url}

\newenvironment{proof}{\hspace{12pt}\textit{Proof:}}{\hfill$\blacksquare$}

\captionsetup[subfloat]{captionskip=2pt}

\newtheorem{problem}{Problem}
\newtheorem{definition}{Definition}
\newtheorem{assumption}{Assumption}
\newtheorem{lemma}{Lemma}
\newtheorem{corollary}{Corollary}
\newtheorem{proposition}{Proposition}

\usepackage[noadjust]{cite}

\title{\LARGE RSSI-based Localization with Adaptive Noise Covariance Estimation\\ for Resilient Multi-Agent Formations}

% Resilient Multi-Agent Formations via RSSI-based Localization
\author{Paul J Bonczek and Nicola Bezzo
\thanks{Paul J Bonczek and Nicola Bezzo are with the Charles L. Brown Department of Electrical \& Computer Engineering and members of the Link Lab at The University of Virginia, Charlottesville, VA, USA. Email: {\tt \{pjb4xn,nb6be\}@virginia.edu}; {\tt Paul.Bonczek@jhuapl.edu}}
}

\newcommand*{\N}{\mathbb{N}}
\newcommand*{\R}{\mathbb{R}}
\newcommand*{\E}{\mathbb{E}}
\newcommand*{\V}{\mathbb{V}}
\newcommand*{\PP}{\mathbb{P}}
\newcommand{\Tau}{\mathrm{T}}

\begin{document}

\maketitle
\thispagestyle{empty}
\pagestyle{empty}
\begin{abstract}

Typical cooperative multi-agent systems (MASs) exchange information to coordinate their motion in proximity-based control consensus schemes to complete a common objective. However, in the event of faults or cyber attacks to on-board positioning sensors of agents, global control performance may be compromised resulting in a hijacking of the entire MAS. For systems that operate in unknown or landmark-free environments (e.g., open terrain, sea, or air) and also beyond range/proximity sensing of nearby agents, compromised agents lose localization capabilities. To maintain resilience in these scenarios, we propose a method to recover compromised agents by utilizing Received Signal Strength Indication (RSSI) from nearby agents (i.e., mobile landmarks) to provide reliable position measurements for localization. To minimize estimation error: i) a multilateration scheme is proposed to leverage RSSI and position information received from neighboring agents as mobile landmarks and ii) a Kalman filtering method adaptively updates the unknown RSSI-based position measurement covariance matrix at runtime that is robust to unreliable state estimates. The proposed framework is demonstrated with simulations on MAS formations in the presence of faults and cyber attacks to on-board position sensors.

\end{abstract}
\section{Introduction} \label{sec:introduction}

One of the essential capabilities for a mobile autonomous system is to localize itself within an environment. The ability to perform an accurate and robust localization allows unmanned systems to achieve truly autonomous operations. These operations can be accomplished in various ways, by relying on positioning sensors like global positioning system (GPS), odometry, and initial measurement unit or through the use of range sensors such as LiDAR, infrared, and camera systems. The sensing information can then be leveraged via localization methods such as Particle filters and Simultaneous Localization and Mapping (SLAM) techniques. 

When considering multi-agent system (MAS) applications, for example robotic swarms, consensus algorithms are typically considered where agents share their states to attain coordinated behaviors in a decentralized fashion to accomplish a desired goal \cite{formation_control1}. When information being exchanged is incorrect, the MAS can be hijacked and lead to unsafe conditions \cite{7822915}. A variety of issues can cause undesirable information to be exchanged between agents, such as cyber attacks or faults to on-board sensors. If known landmarks are present in the operating space, range sensors can be utilized for localization or to determine if the system is performing as expected. However, landmarks may not be available if an MAS is navigating in an open terrain, thus leaving compromised agents unable to reliably localize themselves.

With such premises, the focus on this paper resides on the problem of resilient coordination in MASs that leverage control consensus schemes when one or more agents lose localization capabilities as on-board positioning sensors become unreliable. More specifically, agents are tasked to maintain desired formations within open or unknown environments that do not offer identifiable landmarks and also operate beyond sensing range of nearby agents. To deal with this, the proposed framework enables compromised agents to leverage Received Signal Strength Indication (RSSI) and received position information from neighboring agents for localization (i.e., mobile landmarks). Multilateration is performed using the noisy RSSI measurements and received neighboring agent's positions to provide RSSI-based position measurements in replacement of the unreliable on-board position sensors. To minimize the RSSI-based position measurement error, a weighted least squares method is used. Moreover, the RSSI-based position measurements have an unknown covariance which differ from the nominal on-board position sensor. To improve state estimation performance, compromised agents leverage an adaptive Kalman Filtering method that estimates the RSSI-based position measurement covariance matrix at runtime. The proposed framework is introduced in a generalized manner that may be used on any formation control technique for swarms of homogeneous linear time-invariant (LTI) modeled agents. As a specific case study in this paper, a virtual spring-damper physics model \cite{Paul_TRO} for proximity-based formation control is considered on MASs in a $2$-dimensional coordinate frame. However, our MAS framework can be expanded to heterogeneous systems \cite{hetergeneous_MAS}, non-linear modeled agents \cite{Nonlinear_MAS}, and higher dimension coordinate frames \cite{positioning_3D}. 
\vspace{-2pt}

\subsection{Related Literature}

The topic of system resilience has received significant attention recently, notably in the area of localization within single- and multi-agent system operations \cite{MAS_survey}. Various time-based measurement techniques used for localization have been proposed, such as Time of Arrival (ToA), Time Difference on Arrival (TDoA), Angle of Arrival (AoA), and Time of Flight (ToF) \cite{localization_survey}. Another technique to obtain ranges are from measuring the Received Signal Strength Indicator (RSSI). As the name suggests, RSSI-based techniques rely on measuring the strength of the received radio frequency signal over the communication channel. 

Much effort has been placed on leveraging RSSI to aid in localization within an environment. In particular, numerous articles have leveraged anchor nodes with known positions to aid in localization within indoor environments \cite{survey_LocalizationRSSI}. One example is found in \cite{hsu2016particle}, the proposed framework utilizes a particle filter on the RSSI measurements from known anchor nodes and then fused the position estimate with the remaining system states for improved localization capabilities. In \cite{RSSI_DVhop}, the Weighted Distance Vector Hop algorithm using RSSI is combined with a weighted hyperbolic localization algorithm to estimate the position of any nearby agents in MASs while utilizing the known locations of anchor nodes in the environment. Authors in \cite{RSSI_NarrowCorridor} proposed a method for robotic swarms deployed in indoor environments to effectively navigate through narrow passageways by allocating specific roles to robots to ensure localization accuracy. An approach \cite{RSSI_mixed} was proposed to provide robustness in localization performance within swarms when nearby agents satisfy both Line-of-Sight and Non-Line-of-Sight conditions. In \cite{Mostofi}, RSSI signals are leveraged to estimate AoA of signal sources (i.e., transmitters) and humans/robots for target tracking. 
%In \cite{RSSI_SAR}, theoretical results for an adaptive framework are presented that positions a team of robotic agents acting as mobile routers to provide communication coverage to a remaining subset of client robots, when the knowledge of the clients' positions are unknown. The multi-robot system is able to position the robots behaving as routers to satisfy the client robots' demands, while adapting to changes in wireless signals (i.e., leveraging directionality of signal strength) and the dynamic (and potentially unmapped) environment. 

Similar to our work, authors in \cite{oliveira2014rssi} proposed an RSSI-based localization algorithm for multi-robot teams within anchor-less environments. Their approach combines a Kalman Filter and the Floyd-Warshall algorithm to compute smooth distance estimates between agents, then multidimensional scaling is utilized to estimate relative positions of nearby agents. Differing from \cite{oliveira2014rssi}, we assume that each agent has a nominal localization sensor that is vulnerable to cyber attacks and faults. As such scenarios arise, our framework allows for any compromised agent to perform sensor reconfiguration by removing the nominal position sensor in favor of RSSI-based position sensing capabilities. Moreover, our decentralized approach does not suffer from scalability issues as in \cite{oliveira2014rssi} where authors claim their framework is effective on mobile robot teams of ``approximately up to 10". 

Our work utilizes a similar principal to previous literature that have characterized adaptive adjustments to estimate unknown noise covariances in dynamic systems \cite{AdaptiveR_EKF}, \cite{AdaptiveR_UKF}. In these works, measurement residuals are used to adaptively update estimates of the noise covariance matrices. However, the authors assumed that measurement noise always follows a zero-mean Gaussian distribution (i.e., attack/fault-free conditions), hence the measurement residuals used for noise covariance estimation are also zero-mean Gaussian random variables. In the presence of compromised sensor measurements, state estimates are unreliable; thus compromising the covariance matrix estimation process.

The contributions of this work include: i) a decentralized framework to provide resiliency to MAS formations in the presence of cyber attacks and faults to critical on-board positioning sensors when operations occur in open/unknown environments that lack known landmarks and agents are beyond range sensing of other neighboring agents. Compromised agents perform sensor reconfiguration to leverage RSSI from the nearby agents (i.e., mobile landmarks) to aid in re-localization in replacement of its compromised on-board position sensor, and ii) a novel adaptive Kalman Filtering approach to update the unknown RSSI-based position measurement covariance at runtime that is robust to unreliable state estimates (improving upon \cite{AdaptiveR_EKF,AdaptiveR_UKF}), to optimize position estimation of compromised agents.

\section{Preliminaries} \label{sec:preliminaries}

\subsection{System Model} \label{sec:System_model}

Let us consider a graph $\mathcal{G} = (\mathcal{V}, \mathcal{E})$ where we denote $\mathcal{V}$ as the set of $N_{\mathrm{a}}$ mobile agents and the set $\mathcal{E} \subset \mathcal{V} \times \mathcal{V}$ defines edge connections between agents. Each agent $i = 1,2,\dots,N_{\mathrm{a}}$ is assumed to have dynamics that can be represented in a discrete-time LTI state space form:
\begin{align} \label{eq:dynamical_model}
    \bm{\mathrm{x}}_{i}^{(k+1)} &= \bm{\mathrm{A}} \bm{\mathrm{x}}_i^{(k)} + \bm{\mathrm{B}} \bm{\mathrm{u}}_{i}^{(k)} + \bm{\nu}_i^{(k)} \\
    \label{eq:output_vector}
    \bm{\mathrm{y}}_i^{(k)} &= \bm{\mathrm{Cx}}_i^{(k)} + \bm{\eta}_i^{(k)}
\end{align}
with state $\bm{\mathrm{A}}$, input $\bm{\mathrm{B}}$, and output $\bm{\mathrm{C}}$ matrices consisting of appropriate dimensions, the state vector $\bm{\mathrm{x}}_i^{(k)} \hspace{-.3pt} \in \hspace{-.3pt} \R^n$, control input $\bm{\mathrm{u}}_{i}^{(k)} \hspace{-.3pt} \in \hspace{-.3pt} \R^{N_{\mathrm{m}}}$, and output vector $\bm{\mathrm{y}}_i^{(k)} \hspace{-.3pt} \in \hspace{-.3pt} \R^{N_{\mathrm{s}}}$. Within the state vector are the position coordinates $\bm{\mathrm{p}}_i^{(k)} \hspace{-.2pt} \in \hspace{-.3pt} \R^D$ in a $D$-dimensional Euclidean space. Process and measurement noises are described as i.i.d. zero-mean Gaussian distributions $\bm{\nu}_i^{(k)} \hspace{-.2pt} \sim \hspace{-.2pt} \mathcal{N}(\bm{0},\bm{\mathrm{Q}}) \hspace{-.2pt} \in \hspace{-.2pt} \R^{n}$ and $\bm{\eta}_i^{(k)} \hspace{-.3pt} \sim \hspace{-.2pt} \mathcal{N}(\bm{0},\bm{\mathrm{R}}) \hspace{-.2pt} \in \hspace{-.2pt} \R^{N_{\mathrm{s}}}$ with covariance matrices $\bm{\mathrm{Q}} > 0$ and $\bm{\mathrm{R}} > 0$, respectively. The use of a Kalman Filter (KF) provides state estimates $\hat{\bm{\mathrm{x}}}_i^{(k|k)} \in \R^n$ and predictions $\hat{\bm{\mathrm{x}}}_i^{(k+1|k)} \in \R^n$ for each agent $i$. 

In order for the multi-agent swarm to cooperatively maintain a desired proximity-based formation, the agents exchange their state estimate information. Each $i$th agent follows a control consensus $\mathcal{U}(\cdot,\cdot,\cdot)$ by
\begin{equation} \label{eq:consensus_control}
\bm{\mathrm{u}}_i^{(k)} = \mathcal{U} \big( \hat{\bm{\mathrm{x}}}_i^{(k|k)} , \hat{\bm{\mathrm{x}}}_j^{(k|k)}, \bm{\mathrm{x}}_{\mathrm{ref}}^{(k)} \big)
\end{equation}
to maintain a proximity-based formation while navigating within an environment, given $i \ne j$ and $j \in \mathcal{S}_i$ where $\mathcal{S}_i \subset \mathcal{V}$ is the neighbor set used for control purposes by an agent $i$ and $\bm{\mathrm{x}}_{\mathrm{ref}}^{(k)}$ is a reference state to follow.

\begin{definition}[Control Graph \cite{Paul_TRO}] \label{def:control_graph}
    Given each agent $i$ in the set $\mathcal V$ having a neighbor set for control $\mathcal{S}_i \subset \mathcal V$, we define the graph $\mathcal{G}_{\mathcal{U}} = ( \mathcal{V}, \mathcal{E}_{\mathcal{U}} )$ with the edge set,
    \begin{equation} \label{eq:control_edges}
        \mathcal{E}_{\mathcal{U}} = \big\{ (i,j) \; \big| \; j \in \mathcal{S}_i, \forall i \in \mathcal{V} \big\}
    \end{equation}
    as the \textit{control graph} of the agent set $\mathcal{V}$.
\end{definition}

\subsection{Threat Model} \label{sec:Attack_model}

During operations, we assume agents may experience cyber attacks or faults to on-board positioning sensors. With the loss of reliable position sensing, localization within the environment is compromised, thus degradation of the control performance occurs within proximity-based formations. Without loss of generality, the output vector from \eqref{eq:output_vector} is formalized in terms of position measurements $\bm{\mathrm{y}}_{i,[1:D]}^{(k)}$ (i.e., measuring the $D$-dimensional position) and sensor measurements of other states (if applicable) that are deemed non-vulnerable $\bm{\mathrm{y}}_{i,[(D+1):N_{\mathrm{s}}]}^{(k)}$ as $\bm{\mathrm{y}}_{i}^{(k)}~=~\big[ ( \bm{\mathrm{y}}_{i,[1:D]}^{(k)} )^{\mathsf{T}} \; ( \bm{\mathrm{y}}_{i,[(D+1):N_{\mathrm{s}}]}^{(k)} )^{\mathsf{T}} \big]^{\mathsf{T}}$. 

\vspace{2pt}
We denote the compromised position measurement vector due to cyber attacks or sensor faults as
\begin{equation} \label{eq:attacked_sensor}
    \widetilde{\bm{\mathrm{y}}}_{i,[1:D]}^{(k)} = \bm{\mathrm{C}}_{[1:D]} \bm{\mathrm{x}}_i^{(k)} + \bm{\eta}_{i,[1:D]}^{(k)} + \bm{\xi}_{i}^{(k)}
\end{equation}
where the vector $\bm{\xi}_{i}^{(k)} \in \R^D$ represents altered position measurements from the nominal behavior on an agent $i \in \mathcal{V}$ and $\bm{\mathrm{C}}_{[1:D]}$ indicates the first $D$ rows of the output matrix corresponding to position. When $\bm{\xi}_{i}^{(k)} \ne 0$ is satisfied (i.e., $\widetilde{\bm{\mathrm{y}}}_{i,[1:D]}^{(k)} \ne \bm{\mathrm{y}}_{i,[1:D]}^{(k)}$), this indicates the presence of cyber attacks manipulating position measurements or a faulty sensor.

\subsection{Communication Model} \label{sec:Communication_model}

\begin{figure}[b!]
\vspace{-10pt}
\hspace{-3pt} \subfloat[\label{fig:pathloss} ]{\includegraphics[width = 0.238\textwidth]{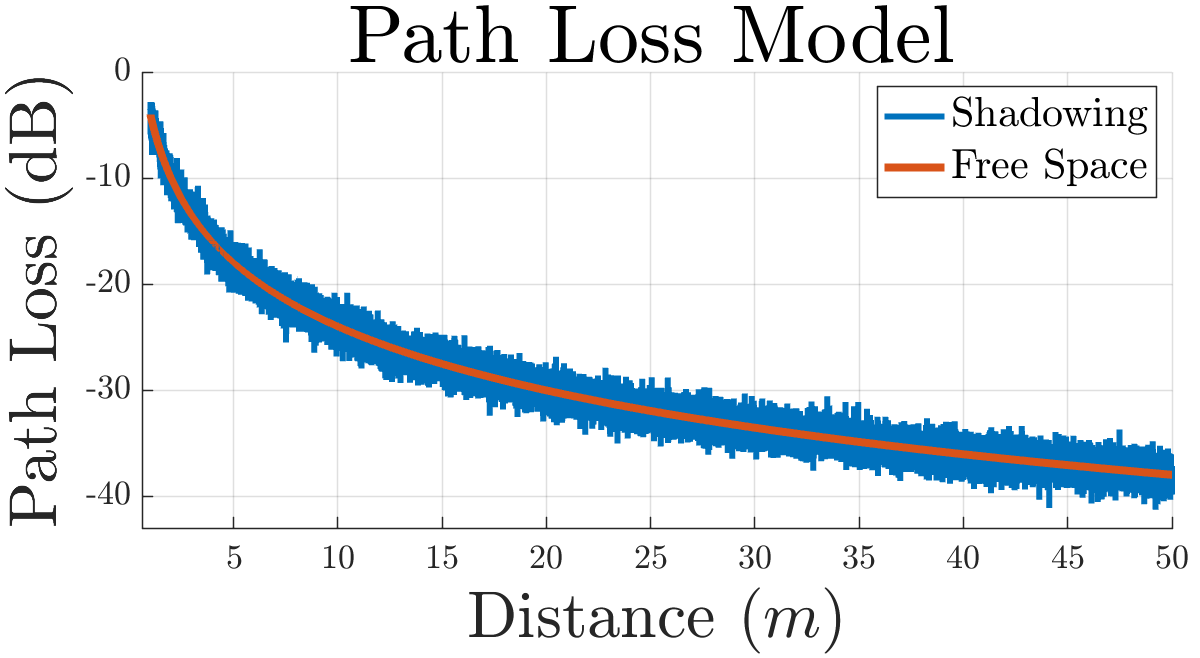}}
\hspace{3pt} \subfloat[\label{fig:estimationError} ]{\includegraphics[width = 0.238\textwidth]{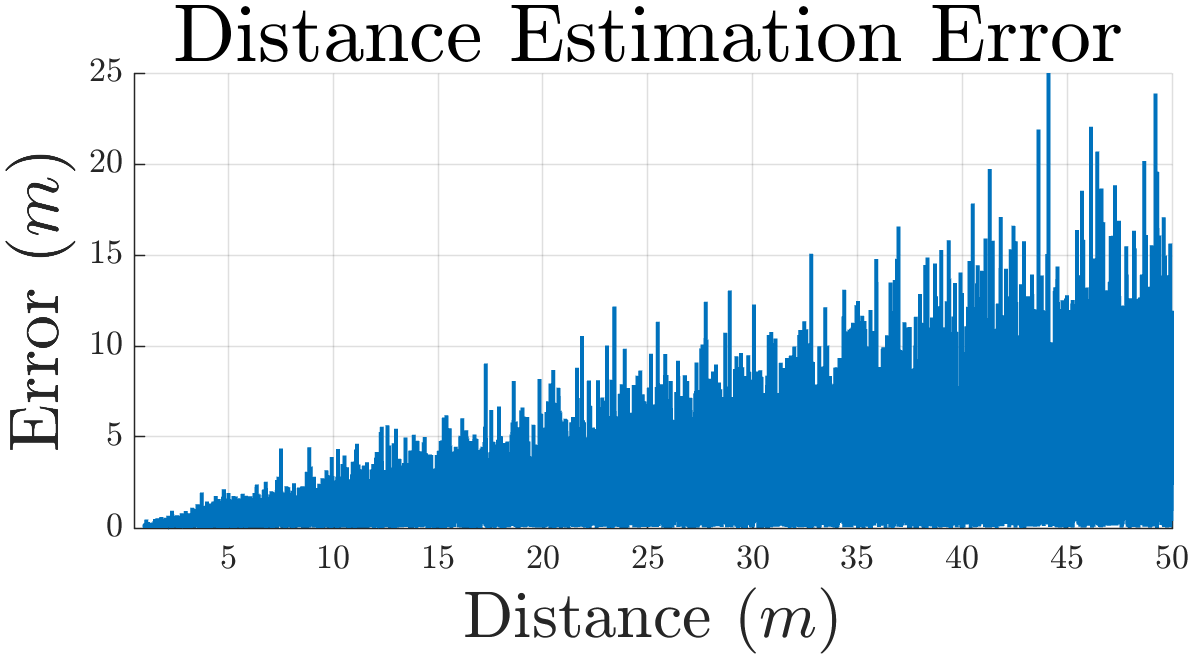}}
\vspace{-1pt}
\caption{An example of the path loss model and the incurred distance estimation error magnitude from RSSI measurements as distance increases.}
\label{fig:PathLossModel}
\vspace{-10pt}
\end{figure}
To overcome malicious cyber attacks or faults to position sensors, agents measure noisy RSSI from the received communications of nearby agents. A commonly-used path loss model is the log-normal shadowing model \cite{goldsmith_2005} defined by:
\vspace{-7pt}
\begin{equation} \label{eq:comm_model}
    P_{ij,[\text{rx}]}^{(k)} = P_{[\text{tx}]} - PL(\mathrm{d}_0) - 10 \beta \log \frac{\mathrm{d}_{ij}^{(k)}}{\mathrm{d}_0} + \Lambda
\end{equation}
where $P_{ij,[\text{rx}]}^{(k)}$ is the measured received power by an agent $i$ of an agent $j$, $PL(\mathrm{d}_0)$ is the power loss (in dB) from a reference distance $\mathrm{d}_0 \in \R_{>0}$, and $\mathrm{d}_{ij}^{(k)} = \| \bm{\mathrm{p}}_i^{(k)} - \bm{\mathrm{p}}_j^{(k)} \|$ denotes the true distance between agents $i$ and $j$. The channel shadowing $\Lambda \sim \mathcal{N}(0,\sigma^2_{\Lambda})$ is modeled as a zero-mean Gaussian noise and $\beta$ is the path loss exponent. It is assumed that all agents have the same transmitting power $P_{[\text{tx}]}$ which is known by the agents beforehand. Fig.~\ref{fig:PathLossModel} provides an example of received signal strength that follows the assumed path loss model with shadowing \eqref{eq:comm_model} and the impact to the corresponding distance estimation as distance between agents increases.

\begin{definition}[Communication Graph \cite{Paul_TRO}] \label{def:comm_graph}
    Given the $N_{\mathrm{a}}$ agents in set $\mathcal V$ with a known maximum communication range $\delta_{\mathrm{c}} \in \R_{>0}$, we define the graph $\mathcal{G}_{\mathcal{C}} = ( \mathcal{V}, \mathcal{E}_{\mathcal{C}} )$ with the edge set represented by
    \begin{equation} \label{eq:comm_edges}
        \mathcal{E}_{\mathcal{C}} = \big\{ (i,j) \; \big| \; \big\| \bm{\mathrm{p}}_i^{(k)} - \bm{\mathrm{p}}_j^{(k)} \big\| \leq \delta_{\mathrm{c}}, \; i,j \in \mathcal{V} \big\}
    \end{equation}
    as the \textit{communication graph} of the agent set $\mathcal{V}$. The set $\mathcal{C}_i = \{ j \in \mathcal{V} \; | \; (i,j) \in \mathcal{E}_{\mathcal{C}} \}$ represents any mobile agent $j$ that is within communication range of an agent $i$ to receive its broadcast information signal.
\end{definition}

\subsection{Problem Formulation} \label{sec:problem}

Given the multi-agent formation topology described by the control graph $\mathcal{G}_{\mathcal{U}}$ that can suffer from degraded control performance due to cyber attacks or faults on individual agent's localization sensors (i.e., $\bm{\xi}_{i}^{(k)} \ne 0$), we are interested in solving the following problems:
\begin{problem} \label{problem1}
\textit{(Detection and Sensor Reconfiguration)} Create a policy such that any agent $i \in \mathcal{V}$ that detects anomalous position sensor measurement behavior can reconfigure its sensor model in the $D$-dimensional space to satisfy:
\begin{equation} \label{eq:reconfiguration}
    \widetilde{\bm{\mathrm{y}}}_{i,[1:D]}^{(k)} \longrightarrow \bar{\bm{\mathrm{y}}}_{i,[1:D]}^{(k)}
\end{equation}
by leveraging the known communication model to provide reliable position measurements $\bar{\bm{\mathrm{y}}}_{i,[1:D]}^{(k)}$ to re-localize itself.
\end{problem}
\vspace{1pt}

Upon detection of sensor attacks/faults and sensor reconfiguration, we want to improve state estimation performance to accommodate the updated sensor measurement model.
\begin{problem} \label{problem2}
\textit{(Estimation Error Minimization)} Create a policy $\mathcal{P}$ where an agent $i \hspace{-1pt} \in \hspace{-1pt} \mathcal{V}$ adaptively estimates the unknown covariance $\bar{\bm{\mathrm{R}}}_i^{(k)}$ for the RSSI-based position measurements that is robust to an unreliable on-board state estimate. Given the updated sensor model in Problem \ref{problem1}, the policy $\mathcal{P}$ follows:
\begin{equation} \label{prob:minimize_er}
    \mathcal{P}\big( \bar{\bm{\mathrm{R}}}_i^{(k)} \big) \longrightarrow \min \Big( \big( \bm{\mathrm{e}}_i^{(k)} \big)^{\mathsf{T}} \bm{\mathrm{e}}_i^{(k)} \Big)
\end{equation}
to minimize its state estimation error $\bm{\mathrm{e}}_i^{(k)} \hspace{-.4pt} = \bm{\mathrm{x}}_i^{(k)} \hspace{-.4pt} - \hat{\bm{\mathrm{x}}}_i^{(k)}$ to provide robust estimation performance within MAS formations.
\end{problem}

\section{Framework} \label{sec:framework}

In this section, we describe the decentralized framework for detection of cyber attacks and faults to on-board positioning sensors and the recovery method by utilizing RSSI measurements from nearby agents (i.e., mobile landmarks) in the MAS to replace the compromised on-board sensor. The block diagram in Fig.~\ref{fig:architecture} summarizes the proposed framework followed by each agent in the swarm to recover from localization sensor/fault to maintain control performance within the formation. As an agent $i$ discovers anomalous behavior to its localization sensors, it switches to a recovery mode which relies on noisy RSSI measurements from nearby uncompromised agents to replace the unreliable on-board sensor measurements. Then, the agent adaptively updates its KF to accommodate the unknown RSSI-based position measurement covariance for improved control performance.
\begin{figure}[ht!]
\vspace{-5pt}
\centering
\includegraphics[width=0.44\textwidth]{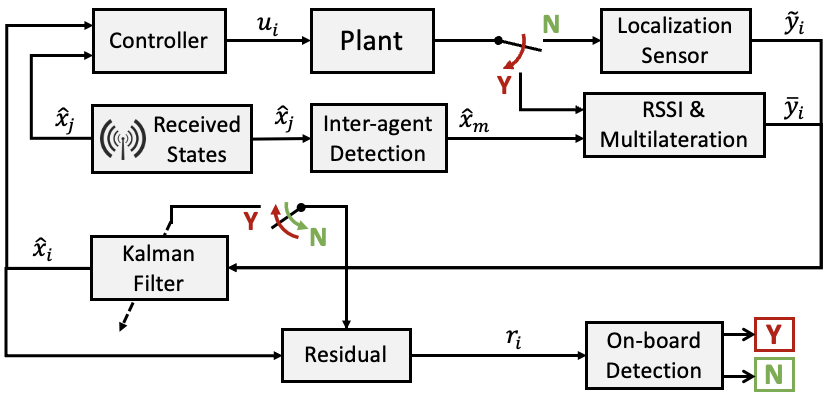}
\vspace{-7pt}
\caption{Overall control architecture followed by each agent $i \in \mathcal{V}$ to remain resilient from localization (i.e., position) sensor attacks and/or faults.}
\label{fig:architecture}
\vspace{-13pt}
\end{figure}
% \begin{figure}[!b]
% \centering
% \includegraphics[width=0.48\textwidth]{Figures/ACC23_Architecture.png}
% \vspace{-17pt}
% \caption{Overall control architecture followed by each agent $i \in \mathcal{V}$ to remain resilient from localization (i.e., position) sensor attacks and/or faults.}
% \label{fig:architecture}
% \vspace{-8pt}
% \end{figure}

\subsection{Anomaly Detection} \label{sec:Detection}

Each agent $i \in \mathcal{V}$ in our proposed framework monitors for inconsistent behavior of both its on-board position sensor and position information from nearby agents. Let us define the {\em{measurement residual}} on an agent $i$:
\begin{equation} \label{eq:residual}
    \bm{\mathrm{r}}_i^{(k)} = \widetilde{\bm{\mathrm{y}}}_{i}^{(k)} - \bm{\mathrm{C}} \hat{\bm{\mathrm{x}}}^{(k|k-1)}_{i}
\end{equation}
as the difference between the measurements and the prediction. We can model the measurement residual covariance matrix (assuming the KF has converged to steady state) during attack-free conditions as $\bm{\Sigma}_i = \bm{\mathrm{C}}\bm{\mathrm{P}}_i^{(\infty)} \bm{\mathrm{C}}^{\mathsf{T}} + \bm{\mathrm{R}}$ where the steady state estimation covariance $\bm{\mathrm{P}}_i^{(\infty)}$ is found from the discrete Riccati equation.
% In nominal conditions (i.e., in the absence of sensor attacks and faults), each $s$th measurement residual element $r_{i,s}^{(k)}$ on an agent $i$ is Normally distributed with parameters:
% \begin{equation} \label{eq:individual_residuals_exp}
%     \E[r_{i,s}] = 0, \text{ } \;\;\; \V[r_{i,s}] = \sigma_{i,s}^2,
% \end{equation}
% where $\sigma_{i,s}^2$ is the $s$th diagonal element of the measurement residual covariance matrix \eqref{eq:Residual_Covariance}. 
Each agent $i$ monitors its $D$-dimensional position sensor measurements for anomalies by way of the commonly-used chi-squared scheme, which produces a scalar quadratic on-board \textit{test measure} computed by
\begin{equation} \label{eq:chisquare_testmeasure}
    \mathrm{z}_i^{(k)} = \big( \bm{\mathrm{r}}_{i,[1:D]}^{(k)} \big)^{\mathsf{T}} \bar{\bm{\Sigma}}_i^{-1} \hspace{1.5pt} \bm{\mathrm{r}}_{i,[1:D]}^{(k)}
\end{equation}
which has an expected chi-squared distribution $\mathrm{z}_i^{(k)} \sim \chi^2(D)$ with $D$ degrees of freedom. The matrix $\bar{\bm{\Sigma}}_i \in \R^{D \times D}$ represents the position sensor measurement covariance block within $\bm{\Sigma}_i =${\scriptsize \setlength\arraycolsep{2pt} $ \begin{bmatrix} \bar{\bm{\Sigma}}_i & * \\ *^{\mathsf{T}} & \breve{\bm{\Sigma}}_i \end{bmatrix}$}
% \begin{equation} \label{eq:pos_covariance}
%     \bm{\Sigma}_i = \begin{bmatrix} \bar{\bm{\Sigma}}_i & * \\[4pt] *^{\mathsf{T}} & \breve{\bm{\Sigma}}_i \end{bmatrix}.
% \end{equation}
where $ \breve{\bm{\Sigma}}_i$ represents the non-position covariance block corresponding to the remaining sensors.

Similarly, each agent monitors for expected behavior of nearby agents according to the control consensus model in \eqref{eq:consensus_control} that is followed by all agents. Each agent $i$ receives both state $\hat{\bm{\mathrm{x}}}^{(k)}_j$ and control input $\bm{\mathrm{u}}^{(k)}_j$ information from any agent $j \in \mathcal{V} \setminus \{i\}$. State evolution predictions $\hat{\bm{\mathrm{x}}}_{ij}^{(k+1)} \in \R^n$ for an agent $j$ are given as:
\begin{equation} \label{eq:neighbor_predict}
    \hat{\bm{\mathrm{x}}}_{ij}^{(k+1)} = \bm{\mathrm{A}} \hat{\bm{\mathrm{x}}}_{j}^{(k)} + \bm{\mathrm{B}} \bm{\mathrm{u}}_{j}^{(k)}
\end{equation}
which is computed by an agent $i$. At every time iteration $k \in \N$, an agent $i$ computes the \textit{inter-agent residual} \cite{Paul_TRO},
\begin{equation} \label{eq:prediction_residual}
    \bm{\mathrm{r}}_{ij}^{(k)} = \hat{\bm{\mathrm{x}}}_j^{(k)} - \hat{\bm{\mathrm{x}}}_{ij}^{(k)} \in \R^n.
\end{equation}
to monitor for consistent behavior of nearby agents.

If the agent $j$ is behaving in a nominal fashion, each inter-agent residual element $q = 1,\dots,n$ follows the distribution $\mathrm{r}_{ij,q}^{(k)} \sim \mathcal{N}(0,\sigma_{ij,q}^2)$ given $\sigma_{ij,q}^2 = \sum_{s=1}^{N_{\text{s}}} \big( \mathrm{K}_{(q,s)} \sigma_{j,s} \big)^2$ where
% \begin{equation} \label{eq:update_residual_distribution}
%     \E[r_{ij,q}] = 0, \;\;\;\; \mathrm{Var}[r_{ij,q}] = \sum_{s=1}^{N_s} \Big( K_{(q,s)} \sigma_{z,s} \Big)^2 ,
% \end{equation}
$\mathrm{K}_{(q,s)}$ represents the element of the $q$th row and $s$th column of the Kalman gain $\bm{\mathrm{K}}$, and $\sigma^2_{j,s}$ is the $s$th diagonal element in the known measurement covariance matrix $\bm{\Sigma}_j = \bm{\Sigma}_i$ \cite{Paul_TRO}. For ease, we construct the inter-agent covariance matrix $\bm{\Sigma}_{ij} \in \R^{n \times n}$ with diagonal elements equal to $\sigma^2_{ij,q}$, i.e. $\bm{\Sigma}_{ij} = \mathrm{diag}(\sigma^2_{ij,1},\dots,\sigma^2_{ij,n})$. In a similar fashion to the on-board test measure in \eqref{eq:chisquare_testmeasure}, the \textit{inter-agent test measure} is computed as 
\begin{equation} \label{eq:interagent_testmeasure}
    \mathrm{z}_{ij}^{(k)} = \big( \bm{\mathrm{r}}_{ij,[1:D]}^{(k)} \big)^{\mathsf{T}} \bar{\bm{\Sigma}}_{ij}^{-1} \hspace{1.5pt} \bm{\mathrm{r}}_{ij,[1:D]}^{(k)}
\end{equation}
by an agent $i$ to monitor for consistency of an agent $j$ where $\bar{\bm{\Sigma}}_{ij} \in \R^{D \times D}$ represents the inter-agent residual covariance block for position within $\bm{\Sigma}_{ij} =${\scriptsize \setlength\arraycolsep{2pt} $ \begin{bmatrix} \bar{\bm{\Sigma}}_{ij} & * \\[1pt] *^{\mathsf{T}} & \breve{\bm{\Sigma}}_{ij} \end{bmatrix}$}.

To monitor for expected behavior of either test measure in \eqref{eq:chisquare_testmeasure} and \eqref{eq:interagent_testmeasure}, simply denoted as $\mathrm{z}^{(k)}$, an agent $i$ creates an alarm-based mechanism when the test measure exceeds a user-defined threshold $\tau$ which follows:
\begin{equation} \label{eq:thresholding_alarm}
\begin{split}
    \zeta^{(k)} = \left\{ \begin{array}{ll}
	1, & \text{if } \mathrm{z}^{(k)} > \tau, \\
    0, & \text{if } \mathrm{z}^{(k)} \leq \tau.
    \end{array} \right.
\end{split}
\end{equation} 

The threshold parameter $\tau \in \R_{>0}$ is tuned to satisfy a user-defined desired false alarm rate $a^{\mathrm{des}} \in (0,1)$.
\begin{lemma}[Threshold] \label{prop:BD_threshold}
    Let us assume that the sensors on agent $i$ are free of cyber attacks and faults while considering the alarm procedure in \eqref{eq:thresholding_alarm} for the test measure $\mathrm{z}^{(k)} \sim \chi^2(D)$ with a threshold $\tau \in \R_{>0}$. To tune for a desired false alarm rate $a^{\mathrm{des}} \in (0,1)$, the threshold $\tau$ is found by
\begin{equation} \label{eq:BD_threshold}
    \tau = 2\Gamma^{-1} \Big( 1 - a^{\mathrm{des}}, \frac{D}{2} \Big)
\end{equation}
to achieve a desired alarm rate, where $\Gamma^{-1}(\cdot,\cdot)$ is the \textit{inverse regularized lower incomplete gamma function} \cite{statsbook}.
\end{lemma}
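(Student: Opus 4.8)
The plan is to express the desired false alarm rate directly as a tail probability of the test measure under the stated attack-free hypothesis, and then invert the chi-squared cumulative distribution function (CDF) to isolate $\tau$. First I would observe that, with sensors free of attacks and faults, the alarm in \eqref{eq:thresholding_alarm} fires precisely when $\mathrm{z}^{(k)} > \tau$, so by definition the false alarm rate is $a^{\mathrm{des}} = \PP\big( \mathrm{z}^{(k)} > \tau \big)$ with $\mathrm{z}^{(k)} \sim \chi^2(D)$. Rewriting this complementary event in terms of the CDF gives $\PP\big( \mathrm{z}^{(k)} \leq \tau \big) = 1 - a^{\mathrm{des}}$, which is the equation to be solved for $\tau$.

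Next I would substitute the closed form of the $\chi^2(D)$ CDF, which coincides with the regularized lower incomplete gamma function evaluated at shape $D/2$ and argument $\tau/2$; that is, $\PP(\mathrm{z}^{(k)} \le \tau) = P(D/2, \tau/2)$ where $P(s,x) = \gamma(s,x)/\Gamma(s)$. Equating this to the required confidence level yields $P(D/2, \tau/2) = 1 - a^{\mathrm{des}}$. Because $P(s,\cdot)$ is continuous and strictly increasing on $\R_{>0}$ for each fixed $s > 0$, it possesses a well-defined inverse, so I can solve for the argument to obtain $\tau/2 = \Gamma^{-1}(1 - a^{\mathrm{des}}, D/2)$, and multiplying through by two recovers exactly \eqref{eq:BD_threshold}.

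The derivation is a direct one-step inversion with no genuine analytical obstacle; the only point requiring care is bookkeeping of the argument convention for $\Gamma^{-1}(\cdot,\cdot)$, namely that its first slot receives the probability $1 - a^{\mathrm{des}}$ and its second slot the shape parameter $D/2$, consistent with the definition in \cite{statsbook}. I would additionally remark that the strict monotonicity of the incomplete gamma function guarantees the solution $\tau$ is unique for every $a^{\mathrm{des}} \in (0,1)$, which is what renders the proposed tuning rule well-posed.
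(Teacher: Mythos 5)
Your proof is correct and follows exactly the reasoning the paper relies on: the paper states this lemma without an explicit proof, treating it as the standard chi-squared quantile identity, and your derivation---setting $\PP(\mathrm{z}^{(k)} > \tau) = a^{\mathrm{des}}$, identifying the $\chi^2(D)$ CDF with the regularized lower incomplete gamma function $P(D/2,\tau/2)$, and inverting by strict monotonicity---is precisely that identity spelled out. Your argument-convention bookkeeping and uniqueness remark match the paper's formula \eqref{eq:BD_threshold} and the sentence following the lemma, $\PP(\mathrm{z}^{(k)} > \tau) \approx a^{\mathrm{des}}$.
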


Formally, the probability $\PP(\cdot)$ that the test measure exceeds the user-defined threshold is described as $\PP(\mathrm{z}^{(k)} > \tau) \approx a^{\mathrm{des}}$. Each agent $i$ updates its alarm rate estimate $\hat{a}^{(k)}$ with the runtime update $\hat{a}^{(k)} = \hat{a}^{(k-1)} + \frac{\zeta^{(k)} - \hat{a}^{(k-1)}}{\ell}$
% \begin{equation} \label{eq:alarm_rate_estimate}
%     \hat{A}^{(k)} = \hat{A}^{(k-1)} + \frac{\zeta^{(k)} - \hat{A}^{(k-1)}}{\ell}
% \end{equation}
where $\hat{a}^{(0)} = a^{\mathrm{des}}$, $\ell > \N_+$, and the alarm rate estimate can be approximated to a Normal distribution with a variance that shares properties of the exponential moving average \cite{moving_average}.

\begin{corollary}[Detection Bounds] \label{cor:detection_bounds}
    Given the tuned threshold \eqref{eq:BD_threshold} for a desired false alarm rate $a^{\mathrm{des}} \in (0,1)$, the position sensor measurements are behaving as expected with a level of significance $\alpha \in (0,1)$ if the estimated alarm rate $\hat{a}^{(k)} \in [0,1]$ satisfies the detection bounds $\hat{a}^{(k)} \in [\Tau_-, \Tau_+]$.
%    \begin{equation} \label{eq:BD_magnitude_bounds}
%        \Upsilon_- \leq \hat{A}^{(k)}_i \leq \Upsilon_+.
%    \end{equation}
\end{corollary}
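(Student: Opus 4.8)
The plan is to read the alarm procedure in \eqref{eq:thresholding_alarm} as a two-sided hypothesis test on the steady-state alarm rate and to obtain $\Tau_\pm$ as the endpoints of a level-$\alpha$ acceptance region. First I would invoke Lemma~\ref{prop:BD_threshold}: under attack/fault-free conditions the test measure satisfies $\mathrm{z}^{(k)} \sim \chi^2(D)$ and the threshold \eqref{eq:BD_threshold} is tuned so that $\PP(\mathrm{z}^{(k)} > \tau) = a^{\mathrm{des}}$. Hence each alarm indicator $\zeta^{(k)}$ is Bernoulli with success probability $a^{\mathrm{des}}$, mean $a^{\mathrm{des}}$, and variance $a^{\mathrm{des}}(1-a^{\mathrm{des}})$. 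The null hypothesis $H_0$ to be tested is precisely that the sensor is nominal, i.e. $\E[\zeta^{(k)}] = a^{\mathrm{des}}$.

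Second, I would characterize the distribution of the running estimate $\hat{a}^{(k)}$. Writing the update as the exponential moving average $\hat{a}^{(k)} = (1-\tfrac{1}{\ell})\hat{a}^{(k-1)} + \tfrac{1}{\ell}\zeta^{(k)}$ and unrolling the recursion expresses $\hat{a}^{(k)}$ as a geometrically weighted sum of the i.i.d. indicators $\zeta^{(j)}$. Taking expectations shows the estimator is (asymptotically) unbiased with mean $a^{\mathrm{des}}$, and summing the squared geometric weights yields the steady-state variance
\begin{equation} \label{eq:ema_variance}
    \sigma_{\hat{a}}^2 = \frac{a^{\mathrm{des}}\big(1 - a^{\mathrm{des}}\big)}{2\ell - 1},
\end{equation}
using $\sum_{m\ge 0}(1-\tfrac{1}{\ell})^{2m} = \ell/(2-\tfrac{1}{\ell})$ together with $\tfrac{1}{\ell^2}\cdot\ell/(2-\tfrac{1}{\ell}) = 1/(2\ell-1)$.

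Third, I would invoke the Gaussian approximation for the exponential moving average already noted in the text, so that under $H_0$ we have $\hat{a}^{(k)} \approx \mathcal{N}\!\big(a^{\mathrm{des}}, \sigma_{\hat{a}}^2\big)$ with $\sigma_{\hat{a}}^2$ from \eqref{eq:ema_variance}. A symmetric two-sided acceptance region at significance level $\alpha$ then follows from the standard-normal quantiles, giving the detection bounds
\begin{equation} \label{eq:det_bounds}
    \Tau_{\pm} = a^{\mathrm{des}} \pm \Phi^{-1}\!\Big(1-\tfrac{\alpha}{2}\Big)\sqrt{\frac{a^{\mathrm{des}}\big(1-a^{\mathrm{des}}\big)}{2\ell-1}},
\end{equation}
with $\Phi^{-1}$ the standard-normal inverse CDF. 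By construction $\PP\big(\hat{a}^{(k)} \in [\Tau_-,\Tau_+] \mid H_0\big) = 1-\alpha$, so whenever $\hat{a}^{(k)} \in [\Tau_-,\Tau_+]$ one fails to reject $H_0$ and declares the position measurements nominal at level of significance $\alpha$, establishing the claim.

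I expect the main obstacle to be justifying the Gaussian approximation of $\hat{a}^{(k)}$: the estimator is a geometrically weighted sum of bounded Bernoulli variables rather than an average of many independent terms, so a textbook central limit theorem does not apply verbatim. I would lean on the effective-sample-size interpretation of the exponential moving average and the cited moving-average variance result \cite{moving_average} to argue the approximation is accurate in the intended tuning regime $\ell \gg 1$. A secondary subtlety is that \eqref{eq:ema_variance} is a steady-state quantity, so strictly it holds only after the transient of the recursion initialized at $\hat{a}^{(0)} = a^{\mathrm{des}}$ has decayed; since the estimator starts exactly at its stationary mean, this transient affects only the variance and not the centering of the bounds.
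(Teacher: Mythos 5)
Your proposal is correct and follows essentially the same route as the paper: the paper's proof likewise constructs a symmetric two-sided Gaussian confidence interval about $a^{\mathrm{des}}$ with z-score $\big|\Phi^{-1}(\alpha/2)\big|$ (identical to your $\Phi^{-1}(1-\alpha/2)$) and the steady-state variance $a^{\mathrm{des}}(1-a^{\mathrm{des}})/(2\ell-1)$, yielding exactly your bounds. The only difference is one of completeness: the paper asserts the variance by citing the exponential-moving-average property, whereas you derive it explicitly from the Bernoulli indicators and the unrolled recursion, which fills in a step the paper leaves to its reference.
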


\begin{proof}
    We construct confidence intervals with a level of significance $\alpha \in (0,1)$ for a Normally distributed random variable with accompanying z-score $\big| \Phi^{-1} \big( \frac{\alpha}{2} \big) \big|$, that provide the detection bounds
    \begin{equation} \label{eq:BD_magnitude_bounds2}
        \Tau_{\pm} = a^{\mathrm{des}} \pm \Big| \Phi^{-1} \Big( \frac{\alpha}{2} \Big) \Big| \sqrt{\frac{ a^{\mathrm{des}} (1-a^{\mathrm{des}})}{2\ell -1}}
    \end{equation}
    where alarm rate estimates that go beyond these bounds are exhibiting anomalous behavior, thus concluding the proof.
\end{proof}

To summarize Corollary \ref{cor:detection_bounds}, when the estimated alarm rate $\hat{a}^{(k)}_i$ for detection of inconsistencies to the on-board test measure $\mathrm{z}_i^{(k)}$ no longer satisfies the detection bounds $\hat{a}_i^{(k)} \not\in [\Tau_-, \Tau_+]$,
% \begin{equation} \label{eq:anomaly_detect}
%     \hat{A}^{(k)} \not\in [\Omega_-, \Omega_+]  \longrightarrow  \textit{Anomaly Detect},
% \end{equation} 
agent $i$ detects that a cyber attack or fault to its positioning sensors is present.
In the case of inter-agent monitoring where $\hat{a}_{ij}^{(k)} \not\in [\Tau_-, \Tau_+]$, an agent $i$ deems an agent $j$ compromised and places the agent into a compromised agent set $\mathcal{V}_i^C \subset \mathcal{V}$, i.e., $j \in \mathcal{V}_i^C$.

\subsection{RSSI-based Position Measurements} \label{sec:Multilateration}

In this subsection, we discuss our localization method that leverages the modeled communication channel described in Section~\ref{sec:Communication_model} to replace the compromised/faulty sensor providing position measurements. Our proposed framework is performed by any compromised agent $i$ that utilizes noisy measured RSSI and received position information from uncompromised neighboring agent's communication broadcasts to compute RSSI-based position measurements. 
%and the weighted least squares (WLS) algorithm based on estimated distances to neighboring agents
%We present RSSI-based multilateration in MASs within $2$-dimensional environments (i.e., $D=2$).

Once an agent $i$ detects anomalous position sensor behavior, the agent no longer relies on the on-board position sensor and begins to measure RSSI from any trustworthy nearby agents. An agent $i$ utilizes estimated distances from RSSI measurements to each uncompromised mobile agent $m$ in the set $\mathcal{M}_i \subset \mathcal{C}_i \setminus \{ \mathcal{V}_i^C \cup i \}$ where the set $\mathcal{M}_i = \{ 1, 2, \dots, M \}$ represents the $M$ in-range uncompromised mobile agents (i.e., mobile landmarks). From the known communication model, the observed path loss $PL_{im}^{(k)}$ (in dB) by an agent $i$ of an agent $m \in \mathcal{M}_i$ from RSSI $P_{im,[\text{rx}]}^{(k)}$ (in dBm) is:
\begin{equation} \label{eq:observed_PL}
    PL_{im}^{(k)} = P_{[\text{tx}]} - P_{im,[\text{rx}]}^{(k)}.
\end{equation}

The estimated distance $\hat{\mathrm{d}}_{im}^{(k)}$ based on received (i.e., measured) signal strength of an agent $m$ that is computed by a compromised agent $i$ follows:
\begin{equation} \label{eq:estimated_distance}
    \hat{\mathrm{d}}_{im}^{(k)} = 10^{ \frac{PL_{im}^{(k)} - PL(\mathrm{d}_0)}{10 \beta} }.
\end{equation}

The RSSI-based distance estimate to an agent $m$ are log-normal random variables \cite{tarrio2011weighted}, provided the assumption that the communication channel follows a log-normal shadowing path loss model in \eqref{eq:comm_model}, \eqref{eq:observed_PL}, and \eqref{eq:estimated_distance}. Consequently, the distance estimate in \eqref{eq:estimated_distance} is a biased estimate. We leverage the assumed log-normal distribution to compensate for the biased estimate. The log-normal random variables are described by the parameters $\mu_{\mathrm{d}}$ and $\sigma_{\mathrm{d}}$ \cite{tarrio2011weighted}:
\begin{equation} \label{eq:lognormal_parameters}
    \mu_{\mathrm{d}} = \ln \mathrm{d}_{im}^{(k)}, \;\;\;\;\; \sigma_{\mathrm{d}} = \frac{\sigma_{\Lambda} \ln 10}{10 \beta}
\end{equation}
with an RSSI distance estimate expectation that follows
\begin{equation} \label{eq:exp_RSSI_meas}
    \E \big[\hat{\mathrm{d}}_{im}^{(k)} \big] = \exp \Big\{\mu_{\mathrm{d}} + \frac{\sigma_{\mathrm{d}}^2}{2} \Big\}
\end{equation}
and the expected estimation bias
\begin{equation} \label{eq:error_bias}
    \mathrm{e}_{im,\mathrm{e}}^{(k)} = \E \big[\hat{\mathrm{d}}_{im}^{(k)} \big] - \mathrm{d}_{im}^{(k)} \in \R_{>0}
\end{equation}
defined as the difference between the RSSI-based distance estimate expectation $\E \big[\hat{\mathrm{d}}_{im}^{(k)} \big]$ and the true distance $\mathrm{d}_{im}^{(k)}$. Given that the true distance $\mathrm{d}_{im}^{(k)}$ is unknown in \eqref{eq:lognormal_parameters}, an agent $i$ leverages the distance $\hat{\mathrm{d}}_{im}^{(k)} = \| \hat{\bm{\mathrm{p}}}_{i}^{(k)} - \hat{\bm{\mathrm{p}}}_{m}^{(k)} \|$ which is the estimated distance between agents $i$ and $m$ from the position estimate of agent $i$ and the received position estimate from an agent $m$. We can rewrite equation \eqref{eq:estimated_distance} to include compensation for the expected estimation bias \eqref{eq:error_bias} by
\begin{equation} \label{eq:estimated_distance2}
    \hat{\mathrm{d}}_{im}^{(k)} = 10^{ \frac{PL_{im}^{(k)} - PL(\mathrm{d}_0)}{10 \beta} } - \mathrm{e}_{im,\mathrm{d}}^{(k)}.
\end{equation}

Each $m$th agent's estimated distances are leveraged with their corresponding received positions $\hat{\bm{\mathrm{p}}}_{m}^{(k)} = [\hat{\mathrm{p}}_{m,\mathrm{x}}^{(k)} \; \hat{\mathrm{p}}_{m,\mathrm{y}}^{(k)}]^{\mathsf{T}}$ to find an optimal position $\bar{\bm{\mathrm{p}}}_i^{(k)} = [\bar{\mathrm{p}}_{i,\mathrm{x}}^{(k)} \; \bar{\mathrm{p}}_{i,\mathrm{y}}^{(k)}]^{\mathsf{T}}$ by minimizing the distance error residuals $\epsilon_m \in \R$ where $ m = 1,\dots,M$ using the following set of equations:
%An agent $i$ utilizes the estimated distances $\hat{d}_{ij}^{(k)}$ from RSSI measurements in \eqref{eq:estimated_distance} to each agent $j \in \mathcal{C}_i$ along with their received position information $\hat{\bm{p}}_{j}^{(k)} = [\hat{p}_{j,\mathrm{x}}^{(k)} \; \hat{p}_{j,\mathrm{y}}^{(k)}]^{\mathsf{T}}$ to perform multilateration for RSSI-based position measurements (in place of the compromised position sensor).
%An agent $i$ utilizes the received positions $\hat{\bm{p}}_{j}^{(k)} = [\hat{p}_{j,\mathrm{x}}^{(k)} \; \hat{p}_{j,\mathrm{y}}^{(k)}]^{\mathsf{T}}$ from the set of $M$ mobile landmarks (i.e., mobile agents within communication range) in the set $\mathcal{M} = \{ m_1, m_2, \dots, m_M \}$ with their estimated distances $\hat{d}_{ij}^{(k)}$ from RSSI measurements to each agent $j \in \mathcal{M}$ by
\begin{equation} \label{eq:distances_circles}
\begin{split}
    \left\{ \hspace{-5pt} \begin{array}{c}
	\hspace{-8pt} \hat{\mathrm{d}}_{i1}^{(k)} = \sqrt{ \big(\hat{\mathrm{p}}_{1,\mathrm{x}}^{(k)} - \bar{\mathrm{p}}_{i,\mathrm{x}}^{(k)} \big)^2 \hspace{-1pt} + \big(\hat{\mathrm{p}}_{1,\mathrm{y}}^{(k)} - \bar{\mathrm{p}}_{i,\mathrm{y}}^{(k)} \big)^2 + \epsilon_1} \\[2pt]
    \hspace{-8pt} \hat{\mathrm{d}}_{i2}^{(k)} = \sqrt{ \big(\hat{\mathrm{p}}_{2,\mathrm{x}}^{(k)} - \bar{\mathrm{p}}_{i,\mathrm{x}}^{(k)} \big)^2 \hspace{-1pt} + \big(\hat{\mathrm{p}}_{2,\mathrm{y}}^{(k)} - \bar{\mathrm{p}}_{i,\mathrm{y}}^{(k)} \big)^2 + \epsilon_2} \\[-2.5pt]
    \vdots \\
    \hat{\mathrm{d}}_{iM}^{(k)} = \sqrt{ \big(\hat{\mathrm{p}}_{M,\mathrm{x}}^{(k)} - \bar{\mathrm{p}}_{i,\mathrm{x}}^{(k)} \big)^2 \hspace{-1pt} + \big(\hat{\mathrm{p}}_{M,\mathrm{y}}^{(k)} - \bar{\mathrm{p}}_{i,\mathrm{y}}^{(k)} \big)^2 + \epsilon_M}
    \end{array} \right.
\end{split}
\end{equation} 
%where $\epsilon_m \in \R$, $ m = 1,\dots,M$ are distance error residuals.

Next, we subtract the first equation from the remaining equations to obtain a system of $M-1$ linear equations
\vspace{-.5pt}
\begin{equation} \label{eq:linear_equations}
    \bm{\Omega}_i \bar{\bm{\mathrm{p}}}_{i}^{(k)} = \bm{\phi}_i + \bm{\varepsilon}_i
\end{equation}
with $\bm{\Omega}_i \hspace{-.5pt} \in \hspace{-.5pt} \R^{(M-1) \times D}\hspace{-.5pt}$, $\bm{\phi}_i \hspace{-.5pt} \in \hspace{-.5pt} \R^{M-1} \hspace{-.5pt}$, and $\bm{\varepsilon}_i \hspace{-.5pt} \in \hspace{-.5pt} \R^{M-1} \hspace{-.5pt}$ found by:
\begin{equation} \label{eq:Omega}
    \hspace{-3pt} \bm{\Omega}_i = \begin{bmatrix} 2\big(\hat{\mathrm{p}}_{2,\mathrm{x}}^{(k)} - \hat{\mathrm{p}}_{1,\mathrm{x}}^{(k)} \big) & 2\big(\hat{\mathrm{p}}_{2,\mathrm{y}}^{(k)} - \hat{\mathrm{p}}_{1,\mathrm{y}}^{(k)} \big) \\[2pt] 2\big(\hat{\mathrm{p}}_{3,\mathrm{x}}^{(k)} - \hat{\mathrm{p}}_{1,\mathrm{x}}^{(k)} \big) & 2 \big(\hat{\mathrm{p}}_{3,\mathrm{y}}^{(k)} - \hat{\mathrm{p}}_{1,\mathrm{y}}^{(k)} \big) \\[-2.5pt] \vdots & \vdots \\[-1pt] 2 \big(\hat{\mathrm{p}}_{M,\mathrm{x}}^{(k)} - \hat{\mathrm{p}}_{1,\mathrm{x}}^{(k)} \big) & 2\big(\hat{\mathrm{p}}_{M,\mathrm{y}}^{(k)} - \hat{\mathrm{p}}_{1,\mathrm{y}}^{(k)} \big) \\ \end{bmatrix}
\end{equation}
\vspace{-1pt}
% \begin{equation} \label{eq:phi}
%     \bm{\phi}_i = \begin{bmatrix} \hat{p}_{2,\mathrm{x}}^2 - \hat{p}_{1,\mathrm{x}}^2 + \hat{p}_{2,\mathrm{y}}^2 - \hat{p}_{1,\mathrm{y}}^2 + \hat{d}_{i2}^2 - \hat{d}_{i1}^2 \\ \hat{p}_{3,\mathrm{x}}^2 - \hat{p}_{1,\mathrm{x}}^2 + \hat{p}_{3,\mathrm{y}}^2 - \hat{p}_{1,\mathrm{y}}^2 + \hat{d}_{i3}^2 - \hat{d}_{i1}^2 \\ \vdots \\ \hat{p}_{M,\mathrm{x}}^2 - \hat{p}_{1,\mathrm{x}}^2 + \hat{p}_{M,\mathrm{y}}^2 - \hat{p}_{1,\mathrm{y}}^2 + \hat{d}_{iM}^2 - \hat{d}_{i1}^2 \\ \end{bmatrix}
% \end{equation}
\begin{equation} \label{eq:phi}
    \hspace{-9pt} \bm{\phi}_i = \begin{bmatrix} \big(\hat{\mathrm{d}}_{i2}^{(k)}\big)^2 - \big(\hat{\mathrm{d}}_{i1}^{(k)}\big)^2 + \mathrm{b}_{2}^{(k)} - \mathrm{b}_{1}^{(k)} \\[2pt] \big(\hat{\mathrm{d}}_{i3}^{(k)}\big)^2 - \big(\hat{\mathrm{d}}_{i1}^{(k)}\big)^2 + \mathrm{b}_{3}^{(k)} - \mathrm{b}_{1}^{(k)} \\[-2pt] \vdots \\[-1pt] \big(\hat{\mathrm{d}}_{iM}^{(k)}\big)^2 - \big(\hat{\mathrm{d}}_{i1}^{(k)}\big)^2 + \mathrm{b}_{M}^{(k)} - \mathrm{b}_{1}^{(k)} \\ \end{bmatrix}
\end{equation}
\vspace{-1pt}
\begin{equation} \label{eq:epsilon}
    \hspace{-3pt} \bm{\varepsilon}_i = \begin{bmatrix} \epsilon_{2}-\epsilon_{1} & \epsilon_{3}-\epsilon_{1}  & \dots & \epsilon_{M}-\epsilon_{1} \end{bmatrix}^{\mathsf{T}}
\end{equation}
where $\mathrm{b}_{m}^{(k)} = \big(\hat{\mathrm{p}}_{m,\mathrm{x}}^{(k)}\big)^2 + \big( \hat{\mathrm{p}}_{m,\mathrm{y}}^{(k)}\big)^2$, $\forall m \in \mathcal{M}_i$. To optimize the position $\bar{\bm{\mathrm{p}}}_i^{(k)} \in \R^D$, we use the following objective function:
\begin{equation} \label{eq:sum_of_squares}
    J \big( \bar{\bm{\mathrm{p}}}_{i}^{(k)} \big) = {\arg \min} \bigg[ \Big\| \bm{\mathrm{W}}_i^{-\frac{1}{2}} \big( \bm{\Omega}_i \bar{\bm{\mathrm{p}}}_{i}^{(k)} - \bm{\phi}_i \big) \Big\|^2 \bigg]
\end{equation}
where $\bm{\mathrm{W}}_i \hspace{-.3pt} \in \hspace{-.3pt} \R^{(M-1) \times (M-1)}$ is a weighting matrix that minimizes the sum of squares of the distance error residual vector $\bm{\varepsilon}_i$. The optimal position is found by solving the objective function \eqref{eq:sum_of_squares} with a weighted least squares (WLS) estimator:
\begin{equation} \label{eq:WLS}
    \bar{\bm{\mathrm{y}}}_{i,[1:D]}^{(k)} = \bar{\bm{\mathrm{p}}}_i^{(k)} = ( \bm{\Omega}_i^{\mathsf{T}} \bm{\mathrm{W}}_i^{-1} \bm{\Omega}_i )^{-1} \bm{\Omega}_i^{\mathsf{T}} \bm{\mathrm{W}}_i^{-1} \bm{\phi}_i.
\end{equation}

This optimal position \eqref{eq:WLS} is equal to the compromised agent $i$'s RSSI-based position measurement $\bar{\bm{\mathrm{y}}}_{i,[1:D]}^{(k)}$.

\begin{proposition}
    An agent $i$ can compute a feasible solution for an RSSI-based position measurement \eqref{eq:WLS} if the number of nearby uncompromised mobile agents $M$ satisfies $M \geq D + 1$.
\end{proposition}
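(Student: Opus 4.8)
The plan is to reduce the feasibility claim to a single algebraic condition: the estimate \eqref{eq:WLS} is well-defined precisely when the $D \times D$ normal matrix $\bm{\Omega}_i^{\mathsf{T}} \bm{\mathrm{W}}_i^{-1} \bm{\Omega}_i$ is invertible, so the whole proof amounts to characterizing when that matrix is nonsingular and tying the characterization to the count $M$.

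First I would exploit the structure of the weighting matrix. Since $\bm{\mathrm{W}}_i$ is symmetric positive definite (implicitly required for the root $\bm{\mathrm{W}}_i^{-1/2}$ in \eqref{eq:sum_of_squares} to exist), write $\bm{\mathrm{W}}_i^{-1} = (\bm{\mathrm{W}}_i^{-1/2})^{\mathsf{T}} \bm{\mathrm{W}}_i^{-1/2}$, which gives $\bm{\Omega}_i^{\mathsf{T}} \bm{\mathrm{W}}_i^{-1} \bm{\Omega}_i = (\bm{\mathrm{W}}_i^{-1/2} \bm{\Omega}_i)^{\mathsf{T}} (\bm{\mathrm{W}}_i^{-1/2} \bm{\Omega}_i)$. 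This is positive semidefinite, and it is positive definite — hence invertible — if and only if $\bm{\mathrm{W}}_i^{-1/2} \bm{\Omega}_i$ has full column rank $D$. Because $\bm{\mathrm{W}}_i^{-1/2}$ is nonsingular, this is equivalent to $\bm{\Omega}_i$ itself having full column rank $D$.

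The second step is a rank/dimension count on $\bm{\Omega}_i \in \R^{(M-1) \times D}$. Its column rank is at most $\min(M-1, D)$, so attaining rank $D$ forces $M - 1 \geq D$, i.e. $M \geq D + 1$; this gives necessity of the stated bound. For sufficiency I would read off the rows of \eqref{eq:Omega}: row $m-1$ equals twice the displacement $\hat{\bm{\mathrm{p}}}_m^{(k)} - \hat{\bm{\mathrm{p}}}_1^{(k)}$, so $\bm{\Omega}_i$ reaches rank $D$ exactly when the displacement vectors $\{ \hat{\bm{\mathrm{p}}}_m^{(k)} - \hat{\bm{\mathrm{p}}}_1^{(k)} \}_{m=2}^{M}$ span $\R^D$, which holds whenever the $M$ mobile landmarks lie in general position (not confined to an affine subspace of dimension below $D$). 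Under $M \geq D+1$ and this non-degeneracy, $D$ of the displacements are linearly independent, so the normal matrix inverts and \eqref{eq:WLS} yields a feasible solution.

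The main obstacle here is a subtlety rather than a deep difficulty: the inequality $M \geq D+1$ is only necessary, and strict sufficiency additionally needs the general-position condition on the landmark geometry, since collinear landmarks in $D=2$ (or coplanar ones in $D=3$) collapse $\mathrm{rank}(\bm{\Omega}_i)$ below $D$ no matter how large $M$ grows. I would therefore state this geometric assumption explicitly, after which the equivalence from the first step closes the argument.
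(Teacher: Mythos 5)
Your proof is correct, and it is both sharper and more complete than the paper's own argument. The paper disposes of the proposition in one line: it notes that $\bm{\Omega}_i \in \R^{(M-1)\times D}$, so when $M-1 < D$ the system is under-determined, $\mathrm{rank}\,(\bm{\Omega}_i) < D$, and infinitely many solutions exist --- that is, it proves only that $M \geq D+1$ is \emph{necessary} and calls the rest trivial. Your first two steps make rigorous what the paper leaves implicit: feasibility of \eqref{eq:WLS} means invertibility of the normal matrix $\bm{\Omega}_i^{\mathsf{T}}\bm{\mathrm{W}}_i^{-1}\bm{\Omega}_i$, and via the factorization $\bm{\Omega}_i^{\mathsf{T}}\bm{\mathrm{W}}_i^{-1}\bm{\Omega}_i = \big(\bm{\mathrm{W}}_i^{-1/2}\bm{\Omega}_i\big)^{\mathsf{T}}\big(\bm{\mathrm{W}}_i^{-1/2}\bm{\Omega}_i\big)$ this is equivalent to $\bm{\Omega}_i$ having full column rank $D$, which forces $M-1 \geq D$. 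More importantly, your third step confronts what the paper silently skips: the proposition is phrased as a sufficiency claim (a feasible solution exists \emph{if} $M \geq D+1$), yet the count alone does not guarantee $\mathrm{rank}\,(\bm{\Omega}_i) = D$. Since the rows of \eqref{eq:Omega} are the displacements $2\big(\hat{\bm{\mathrm{p}}}_m^{(k)} - \hat{\bm{\mathrm{p}}}_1^{(k)}\big)$, collinear landmarks in $D=2$ (or coplanar ones in $D=3$) collapse the rank below $D$ no matter how large $M$ is, so the implication as literally stated fails without a non-degeneracy hypothesis. Your explicit general-position assumption is exactly what is needed to make the stated implication true; in effect, your argument repairs a gap in the proposition itself that the paper's necessity-only dimension count never addresses. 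What the paper's approach buys is brevity, and it is adequate if one reads the proposition as asserting only the minimal landmark count; what yours buys is a correct two-sided characterization (rank condition plus geometry) of when the WLS estimate actually exists.
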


\begin{proof}
    Given the dimensions of the matrix $\bm{\Omega}_i \hspace{-.5pt} \in \hspace{-.5pt} \R^{(M-1) \times D}\hspace{-.5pt}$ and the resulting estimated position vector $\bar{\bm{\mathrm{p}}}_i^{(k)} \in \R^D$ within \eqref{eq:WLS}, we have a trivial proof, as under-determined systems have an infinite number of solutions due to linear dependency, i.e., $\mathrm{rank} \hspace{1pt}(\bm{\Omega}_i) < D$.
\end{proof}

% \begin{remark}
%     For an agent $i$ to compute a single feasible solution for an RSSI-based position measurement (from \eqref{eq:linear_equations} and \eqref{eq:WLS}), the number of nearby uncompromised mobile agents $M$ must satisfy $M \geq D + 1$. This is because under-determined systems have an infinite number of solutions.
% \end{remark}

Next, the weighting matrix $\bm{\mathrm{W}}_i$ is characterized to compute an optimal RSSI-based measurement error from \eqref{eq:WLS}.

%\NB{what are the assumptions on the number of compromised vehicles? What you are describing will probably not work if you have more than half vehicles compromised because RSSI is relative!}

% A weight least squares objective function can be expressed as:
% \begin{equation} \label{eq:WLS_OF}
%     J(\bm{y}_{i,[1:2]}) = (\bm{\Omega}_i \bm{y}_{i,[1:2]} - \bm{\phi}_i)^{\mathsf{T}} \bm{W}_i (\bm{\Omega}_i \bm{y}_{i,[1:2]} - \bm{\phi}_i)
% \end{equation}
% where $\bm{W}_i \in \R^{(M-1) \times (M-1)}$ is a weighting matrix. Then, the compromised agent's RSSI-based position measurement is found by the weighted least squares estimator:
% \begin{equation} \label{eq:WLS}
%     \bm{y}_{i,[1:2]} = ( \bm{\Omega}_i^{\mathsf{T}} \bm{W}_i^{-1} \bm{\Omega}_i )^{-1} \bm{\Omega}_i^{\mathsf{T}} \bm{W}_i^{-1} \bm{\phi}_i,
% \end{equation}

\subsection{Weighting Matrix} \label{sec:Hyperbolic}

To compute the weighting matrix, we first assume that the RSSI-based estimated distances $\hat{\mathrm{d}}_{im}^{(k)}$ from each uncompromised agent $m \in \mathcal{M}_i$ are independent from each other. Thus, the weighting matrix $\bm{\mathrm{W}}_i$ is the covariance of the vector $\bm{\phi}_i$ that can be calculated by a hyperbolic weighting matrix with each $(p_{\mathrm{w}},q_{\mathrm{w}})$th element (i.e., $\mathrm{w}_{p_{\mathrm{w}}q_{\mathrm{w}}}$) given as \cite{tarrio2011weighted}:
\begin{equation} \label{eq:weighting_matrix}
    \mathrm{w}_{p_{\mathrm{w}}q_{\mathrm{w}}} \hspace{-2pt}=\hspace{-1pt} \left\{ \hspace{-4pt} \begin{array}{ll}
	\V \hspace{-.5pt} \Big[ \hspace{-1.5pt}\big( \hat{\mathrm{d}}_{i1}^{(k)} \big)^2 \hspace{-1pt}\Big] \hspace{-1.3pt} + \hspace{-1pt} \V \hspace{-.5pt} \Big[ \hspace{-1.5pt}\big( \hat{\mathrm{d}}_{i{(p_{\mathrm{w}}+1)}}^{(k)} \big)^2 \hspace{-1pt}\Big] & \hspace{-3.5pt}\text{if } p_{\mathrm{w}} \hspace{-1pt} = \hspace{-1pt} q_{\mathrm{w}} \\[5pt]
    \V \hspace{-.5pt} \Big[ \hspace{-1.5pt}\big(\hat{\mathrm{d}}_{i1}^{(k)} \big)^2 \hspace{-1pt}\Big] & \hspace{-3.5pt}\text{if } p_{\mathrm{w}} \hspace{-1pt} \ne \hspace{-1pt} q_{\mathrm{w}}
    \end{array} \right.
\end{equation}
where the elements of the weighting matrix are inter-agent RSSI-based estimation variances $\V[\cdot]$. Given the log-normal shadowing path loss communication model in \eqref{eq:comm_model}, \eqref{eq:observed_PL}, and \eqref{eq:estimated_distance}, the RSSI-based estimated distances $\hat{\mathrm{d}}_{im}^{(k)}$ in \eqref{eq:estimated_distance2} are log-normal random variables where we can leverage the modeled variances. The estimation variances are computed by \cite{tarrio2011weighted}:
\begin{equation} \label{eq:estimated_variance}
    \V \big[ \big( \hat{\mathrm{d}}_{im}^{(k)} \big)^2 \big] = \E\big[ \big( \hat{\mathrm{d}}_{im}^{(k)} \big)^4 \big] - \big( \E \big[ \big( \hat{\mathrm{d}}_{im}^{(k)} \big)^2 \big] \big)^2
\end{equation}
which are found from the $c$th moment of a given log-normal distributed variable that is represented as $\E[(\hat{\mathrm{d}}_{im})^c] = \exp \big\{ {c\mu_{\mathrm{d}} + \frac{c^2\sigma^2_{\mathrm{d}}}{2}} \big\}$ with log-normal parameters. 
Therefore, the second and fourth moments from \eqref{eq:estimated_variance} are:
\begin{align} \label{eq:cth_moment2}
    \E\big[ \big( \hat{\mathrm{d}}_{im}^{(k)} \big)^2 \big] &= \exp \big\{ {2\mu_{\mathrm{d}} + 2\sigma^2_{\mathrm{d}}} \big\}, \\ 
    \label{eq:cth_moment4} \E \big[ \big( \hat{\mathrm{d}}_{im}^{(k)} \big)^4 \big] &= \exp \big\{ {4\mu_{\mathrm{d}} + 8\sigma^2_{\mathrm{d}}} \big\},
\end{align}
then substituting \eqref{eq:cth_moment2} and \eqref{eq:cth_moment4} into \eqref{eq:estimated_variance}, we result in
\begin{equation} \label{eq:covariance_elements}
    \V \big[ \big( \hat{\mathrm{d}}_{im}^{(k)} \big)^2 \big] = \exp \Big\{ {4\mu_{\mathrm{d}}} \Big( \exp \big\{ {8\sigma_{\mathrm{d}}^2} \big\} - \exp \big\{ {4\sigma_{\mathrm{d}}^2} \big\} \Big) \Big\}
\end{equation}
which are used in the construction of the weighting matrix \eqref{eq:weighting_matrix} to compute the optimized RSSI-based position measurement $\bar{\bm{\mathrm{y}}}_{i,[1:D]}^{(k)}$ by using the WLS algorithm \eqref{eq:WLS}.

\subsection{Robust Adaptive Noise Covariance Estimation} \label{sec:Adaptive_KF}

With the sensor model updated to replace nominal position sensor with RSSI-based position measurements, the measurement covariance matrix $\bm{\mathrm{R}}$ used during nominal conditions is no longer suitable for optimal state estimation. A compromised agent $i$ leverages the recursive KF process to allow for adaptive updates (i.e., time varying) of the measurement noise covariance matrix $\bm{\mathrm{R}}_i^{(k)}$. The adaptive KF can be described in three phases, as follows:

\subsubsection{Prediction} As in the same manner of a typical recursive KF, the following equations denote
\begin{align} \label{eq:state_predict}
	\hat{\bm{\mathrm{x}}}_{i}^{(k|k-1)} &= \bm{\mathrm{A}} \hat{\bm{\mathrm{x}}}^{(k-1|k-1)}_{i} + \bm{\mathrm{B}} \bm{\mathrm{u}}^{(k-1)}_{i}, \\
    \label{eq:covariance_predict}
	\bm{\mathrm{P}}^{(k|k-1)}_{i} &= \bm{\mathrm{A}} \bm{\mathrm{P}}^{(k-1|k-1)}_{i} \bm{\mathrm{A}}^{\mathsf{T}} + \bm{\mathrm{Q}},
\end{align}
the predicted state and estimation error covariance.
	
\subsubsection{Correction} 
In this work, we introduce a novel residual-based method to estimate the unknown covariance of the RSSI-based measurements. Due to attacks/faults to position sensors, the residual employed for estimation must omit the use of the unreliable position estimate $\hat{\bm{\mathrm{x}}}_{i,[1:D]}^{(k)}$; hence the commonly-used measurement residual vector \eqref{eq:residual} can not be utilized (such as in \cite{AdaptiveR_EKF,AdaptiveR_UKF}). To deal with this problem, we leverage the RSSI-based position measurements \eqref{eq:WLS}, system model \eqref{eq:dynamical_model} and \eqref{eq:output_vector}, and the remaining states in the state estimation vector $\hat{\bm{\mathrm{x}}}_{i,[(D+1):n]}^{(k)}$ to estimate the RSSI-based position measurement covariance in a robust manner. Furthermore, we leverage known properties from serial randomness over the sequence of data \cite{Paul_ACC} in measurement covariance estimation.

First, to reconfigure to RSSI-based position measurements, an agent $i$ updates the first $D$ rows of the output matrix $\bm{\mathrm{C}}$ to form the updated output matrix $\bar{\bm{\mathrm{C}}}_i$ by
\begin{equation} \label{eq:pos_outputMatrix}
    \bar{\bm{\mathrm{C}}}_{i,[1:D]} = \Big[ \; \bm{\mathrm{I}}_{D} \;\;\;\; \bm{0}_{D \times (N_{\mathrm{s}}-D)} \; \Big]
\end{equation}
where $\bm{\mathrm{I}}_{D} \hspace{-1pt} \in \hspace{-1pt} \R^{D \times D}$ denotes a $D$-dimension identity matrix, as the RSSI-based position measurements have a $1$:$1$ mapping with the position states (i.e., $\bar{\bm{\mathrm{y}}}_{i,[1:D]}^{(k)} \hspace{-1pt}= \hspace{-1pt} \bar{\bm{\mathrm{p}}}_{i}^{(k)}$ where $\bar{\bm{\mathrm{p}}}_{i}^{(k)}$ is the direct mapping of the state from the RSSI-based position measurement in \eqref{eq:WLS}). The updated output matrix used by the adaptive KF is denoted by $\bar{\bm{\mathrm{C}}}_i \hspace{-.6pt} \in \hspace{-.6pt} \R^{N_{\mathrm{s}} \times n}$ such that the first $D$ rows described in \eqref{eq:pos_outputMatrix} and the remaining $N_{\mathrm{s}}-D$ rows (if applicable) are the same as the output matrix $\bm{\mathrm{C}}$.
\begin{assumption}
    The RSSI-based position measurements can be approximated as a Gaussian distributed vector with covariance $\bar{\bm{\mathrm{R}}}_i^{(k)} \hspace{-1pt} \in \hspace{-1pt} \R^{D \times D}$ centered over the true position $\bm{\mathrm{p}}_i^{(k)}$ of the compromised agent $i$ (i.e., $\bar{\bm{\mathrm{y}}}_{i,[1:D]}^{(k)} \approx \mathcal{N}(\bm{\mathrm{p}}_i^{(k)}, \bar{\bm{\mathrm{R}}}_i^{(k)} ) $).
\end{assumption}

\begin{lemma}
    A compromised agent $i$ that has reconfigured its sensor model to incorporate RSSI-based position measurements $\bar{\bm{\mathrm{y}}}_{i,[1:D]}^{(k)}$ updates its covariance matrix for position measurements $\bar{\bm{\mathrm{R}}}_i^{(k)}$ such that $\bar{\bm{\mathrm{R}}}^{(k)}_{i} = \frac{ \bar{\bm{\Sigma}}^{(k)}_{i} - 2\bar{\bm{\mathrm{Q}}} }{2}$ to minimize its state estimation error.
\end{lemma}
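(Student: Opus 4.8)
The plan is to build a position-channel residual whose statistics depend only on the trusted RSSI measurements, the known dynamics, and the reliable non-position states---never on the corrupted position estimate $\hat{\bm{\mathrm{x}}}_{i,[1:D]}^{(k)}$---and then to read off $\bar{\bm{\mathrm{R}}}_i^{(k)}$ by matching the residual covariance against the noise contributions. Following the serial-randomness idea of \cite{Paul_ACC}, I would replace the filtered position inside the model-based prediction by a \emph{raw} RSSI-based position from an earlier time and propagate it forward through \eqref{eq:dynamical_model}, so that the position estimate is never consulted. A representative choice is the two-step residual $\bar{\bm{\mathrm{r}}}_i^{(k)} = \bar{\bm{\mathrm{y}}}_{i,[1:D]}^{(k)} - \bar{\bm{\mathrm{C}}}_{i,[1:D]}\big(\bm{\mathrm{A}}^2 \breve{\bm{\mathrm{x}}}_i^{(k-2)} + \bm{\mathrm{A}}\bm{\mathrm{B}}\bm{\mathrm{u}}_i^{(k-2)} + \bm{\mathrm{B}}\bm{\mathrm{u}}_i^{(k-1)}\big)$, where $\breve{\bm{\mathrm{x}}}_i^{(k-2)}$ stacks the earlier RSSI position $\bar{\bm{\mathrm{y}}}_{i,[1:D]}^{(k-2)}$ on top of the reliable states $\hat{\bm{\mathrm{x}}}_{i,[(D+1):n]}^{(k-2)}$, and $\bar{\bm{\Sigma}}_i^{(k)}$ denotes the covariance of this residual sequence.

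The core step is to substitute the output model \eqref{eq:output_vector}, the Gaussian approximation $\bar{\bm{\mathrm{y}}}_{i,[1:D]}^{(k)}\approx\mathcal N(\bm{\mathrm{p}}_i^{(k)},\bar{\bm{\mathrm{R}}}_i^{(k)})$, and the state recursion into $\bar{\bm{\mathrm{r}}}_i^{(k)}$ so that all deterministic (control-driven) terms cancel. What survives is a linear combination of (i) the two RSSI measurement noises entering at the non-adjacent indices $k$ and $k-2$, (ii) the process-noise injections $\bm{\nu}_{i,[1:D]}$ accumulated over the two propagation steps, and (iii) the non-position state-estimation error carried through $\bm{\mathrm{A}}^2$. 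Since $\{\bm{\nu}_i^{(k)}\}$ and $\{\bm{\eta}_i^{(k)}\}$ are i.i.d.\ zero-mean and the two measurement-noise samples sit at non-adjacent times, every cross-covariance vanishes and the residual covariance separates additively. Dropping term (iii) by the reliability of $\hat{\bm{\mathrm{x}}}_{i,[(D+1):n]}^{(k)}$, and invoking the slowly-varying assumption $\bar{\bm{\mathrm{R}}}_i^{(k)}\approx\bar{\bm{\mathrm{R}}}_i^{(k-2)}$ across the short window, the measurement noises contribute $2\bar{\bm{\mathrm{R}}}_i^{(k)}$ and the accumulated process noise contributes $2\bar{\bm{\mathrm{Q}}}$, giving $\bar{\bm{\Sigma}}_i^{(k)} = 2\bar{\bm{\mathrm{R}}}_i^{(k)} + 2\bar{\bm{\mathrm{Q}}}$. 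Rearranging yields the claimed $\bar{\bm{\mathrm{R}}}_i^{(k)} = \tfrac{1}{2}\big(\bar{\bm{\Sigma}}_i^{(k)} - 2\bar{\bm{\mathrm{Q}}}\big)$.

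To make this operational I would estimate $\bar{\bm{\Sigma}}_i^{(k)}$ online as an exponentially-weighted sample covariance of $\bar{\bm{\mathrm{r}}}_i^{(k)}$, mirroring the runtime alarm-rate update, so that $\bar{\bm{\mathrm{R}}}_i^{(k)}$ tracks the true RSSI covariance and, fed back into the KF correction, minimizes the estimation error in the sense of Problem~\ref{problem2}. The main obstacle is pinning down the exact coefficient structure---justifying that \emph{precisely} two uncorrelated process-noise terms ($2\bar{\bm{\mathrm{Q}}}$) and two uncorrelated measurement-noise terms ($2\bar{\bm{\mathrm{R}}}_i^{(k)}$) survive. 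This rests on three ingredients that the argument must consume explicitly: the whiteness of $\{\bm{\nu}_i^{(k)}\}$ and $\{\bm{\eta}_i^{(k)}\}$ so that all cross terms vanish; the choice of non-adjacent time indices that keeps the two measurement-noise samples independent; and the accuracy of $\hat{\bm{\mathrm{x}}}_{i,[(D+1):n]}^{(k)}$, which makes the $\bm{\mathrm{A}}^2$-propagated state-error term negligible. If the non-position error is not negligible, an extra $\bm{\mathrm{A}}$-dependent term appears and the clean factor-of-two identity breaks, so the proof should flag exactly where each assumption is used.
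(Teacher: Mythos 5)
Your proposal is correct in substance and arrives at the paper's identity, but through a genuinely different residual construction. The paper's own proof uses a \emph{one-step} residual: it compares the RSSI position at time $k$ against the one-step propagation $\bm{\mathrm{A}}\bar{\bm{\mathrm{x}}}_i^{(k-1)}+\bm{\mathrm{B}}\bm{\mathrm{u}}_i^{(k-1)}$, where $\bar{\bm{\mathrm{x}}}_i^{(k-1)}$ stacks the raw RSSI position $\bar{\bm{\mathrm{p}}}_i^{(k-1)}$ on top of the reliable non-position estimates --- the same ``never consult the corrupted position estimate'' trick you use, but at lag one rather than lag two. The paper then attributes measurement noise $\bm{\eta}^{(k)}_{i,[1:D]},\bm{\eta}^{(k-1)}_{i,[1:D]}$ to the two RSSI positions and process noise $\bm{\nu}^{(k-1)}_{i,[1:D]},\bm{\nu}^{(k-2)}_{i,[1:D]}$ to the measurement and the prediction respectively, sums the four variances to get $\bar{\bm{\Sigma}}^{(k)}_i \approx 2\bar{\bm{\mathrm{R}}}^{(k)}_i+2\bar{\bm{\mathrm{Q}}}$, and finishes exactly as you do: an exponential-forgetting sample covariance of the residual followed by rearrangement. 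What your two-step variant buys is a cleaner justification of the factor $2\bar{\bm{\mathrm{Q}}}$: over two propagation steps the measurement path accumulates two genuinely independent injections $\bm{\mathrm{A}}\bm{\nu}_i^{(k-2)}+\bm{\nu}_i^{(k-1)}$ that are absent from the prediction path, so (with the position block of $\bm{\mathrm{A}}$ taken as identity) the additive separation into $2\bar{\bm{\mathrm{R}}}^{(k)}_i+2\bar{\bm{\mathrm{Q}}}$ is airtight. In the paper's one-step residual, a strict substitution of the dynamics shows that the $\bm{\nu}_i^{(k-2)}$ attributed to the prediction is the same realization already embedded in the true position at time $k$, so it cancels in the difference and the one-step residual covariance is closer to $2\bar{\bm{\mathrm{R}}}^{(k)}_i+\bar{\bm{\mathrm{Q}}}$; the paper's factor of two on $\bar{\bm{\mathrm{Q}}}$ rests on treating those contributions as independent. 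Conversely, the paper's lag-one construction is what the implemented algorithm actually computes, uses the freshest data, and couples the neglected non-position estimation error through $\bm{\mathrm{A}}$ rather than $\bm{\mathrm{A}}^2$ (half the sensitivity for a double integrator). Note also that since the lemma's $\bar{\bm{\Sigma}}^{(k)}_i$ is, in the paper, defined as the covariance of its lag-one residual, your proof technically establishes the formula for a differently defined $\bar{\bm{\Sigma}}^{(k)}_i$; the remaining ingredients --- the Gaussian approximation of the RSSI positions, whiteness of $\{\bm{\nu}_i^{(k)}\}$ and $\{\bm{\eta}_i^{(k)}\}$, the slowly-varying assumption on $\bar{\bm{\mathrm{R}}}^{(k)}_i$, and the neglect of the non-position error term that you flag --- are common to both arguments and are needed by the paper's proof as well.
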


\begin{proof}
We leverage properties of serial randomness to aid in updating the unknown covariance of RSSI-based position measurements (i.e., the observation of consecutive position measurements computed in \eqref{eq:WLS}). To begin, a compromised agent $i$ computes the \textit{RSSI-based measurement residual}
\begin{equation} \label{eq:pos_residual}
    \bar{\bm{\mathrm{r}}}_{i}^{(k)} = \bar{\bm{\mathrm{y}}}_{i,[1:D]}^{(k)} - \bar{\bm{\mathrm{C}}}_{i,[1:D]} \hat{\bar{\bm{\mathrm{x}}}}_{i}^{(k)}
\end{equation}
which is a comparison between the RSSI-based position measurement \eqref{eq:WLS} and a position prediction $\bar{\bm{\mathrm{C}}}_{i,[1:D]} \hat{\bar{\bm{\mathrm{x}}}}_{i}^{(k)}$. The prediction vector $\hat{\bar{\bm{\mathrm{x}}}}_{i}^{(k)}$ in \eqref{eq:pos_residual} is computed by
\begin{equation} \label{eq:meas_prediction}
    \hat{\bar{\bm{\mathrm{x}}}}_{i}^{(k)} = \bm{\mathrm{A}} \bar{\bm{\mathrm{x}}}_{i}^{(k-1)} + \bm{\mathrm{B}} \bm{\mathrm{u}}_i^{(k-1)}
\end{equation}
where $\bar{\bm{\mathrm{x}}}_{i}^{(k-1)} \in \R^n$ represents the previous state estimate vector used for correction at time $k-1$. The vector $\bar{\bm{\mathrm{x}}}_{i}^{(k-1)}$ used for correction is defined as:
\begin{equation} \label{eq:state_predict_update}
    \bar{\bm{\mathrm{x}}}_{i}^{(k-1)} = \big[ \big( \bar{\bm{\mathrm{p}}}_{i}^{(k-1)} \big)^{\mathsf{T}}  \hspace{5pt} \big( \hat{\bm{\mathrm{x}}}_{i,[(D+1):n]}^{(k-1|k-1)} \big)^{\mathsf{T}}  \big]^{\mathsf{T}}.
\end{equation}
which is composed of the state estimate vector with its position elements $\hat{\bm{\mathrm{x}}}_{i,[1:D]}^{(k-1|k-1)}$ replaced with the positions from the RSSI-based position measurement $\bar{\bm{\mathrm{y}}}_{i,[1:D]}^{(k-1)} = \bar{\bm{\mathrm{p}}}_{i}^{(k-1)}$.

At each time $k \in \N$ the difference (i.e., the residual in \eqref{eq:pos_residual}) $\bar{\bm{\mathrm{r}}}_{i}^{(k)} = \bar{\bm{\mathrm{p}}}_{i}^{(k)} - \hat{\bar{\bm{\mathrm{p}}}}_{i}^{(k)}$ where $\hat{\bar{\bm{\mathrm{p}}}}_{i}^{(k)} = \hat{\bar{\bm{\mathrm{x}}}}_{i,[1:D]}^{(k)}$ from \eqref{eq:meas_prediction} is monitored between two consecutive positions at times $k$ and $k-1$. The noise characteristics for the measured positions $\bar{\bm{\mathrm{y}}}_{i,[1:D]}^{(k)}$ are subject to both measurement noise $\bar{\bm{\mathrm{R}}}_i^{(k)}$ and process noise $\bar{\bm{\mathrm{Q}}}$ of the position, where $\bar{\bm{\mathrm{Q}}}$ is the noise covariance block within $\bm{\mathrm{Q}}$ that corresponds to the position states of the compromised sensors. Given these noise properties, the expectation of the RSSI measured residual are described as:
\begin{equation} \label{eq:E_RSSIresidual}
\begin{split}
    \E[\bar{\bm{\mathrm{r}}}_{i}^{(k)}] &= \E[\bm{\eta}_{i,[1:D]}^{(k)}] - \E[\bm{\eta}_{i,[1:D]}^{(k-1)}] \\ 
    & \;\;\; + \E[\bm{\nu}_{i,[1:D]}^{(k-1)}] - \E[\bm{\nu}_{i,[1:D]}^{(k-2)}] = \bm{0},
\end{split}
\end{equation}
\begin{equation} \label{eq:V_RSSIresidual}
\begin{split}
    \hspace{-33pt}\V[\bar{\bm{\mathrm{r}}}_{i,[1:D]}^{(k)}] &= \V[\bm{\eta}_{i,[1:D]}^{(k)}] + \V[\bm{\eta}_{i,[1:D]}^{(k-1)}] \\
    & \;\;\; + \V[\bm{\nu}_{i,[1:D]}^{(k-1)}] + \V[\bm{\nu}_{i,[1:D]}^{(k-2)}] \\
    &= 2 \bar{\bm{\mathrm{R}}}_i^{(k)} + 2\bar{\bm{\mathrm{Q}}} \approx \bar{\bm{\Sigma}}^{(k)}_{i}.
\end{split}
\end{equation}

The objective is to estimate the positive definite covariance matrix of the RSSI measurement residual
\begin{equation} \label{eq:update_cov_residual}
    \hspace{-5pt} \bar{\bm{\Sigma}}^{(k)}_{i} = \E \Big[ \bar{\bm{\mathrm{r}}}_{i}^{(k)} \big( \bar{\bm{\mathrm{r}}}_{i}^{(k)} \big)^{\mathsf{T}} \Big] \in \R^{D \times D}
\end{equation}
by using a rolling runtime estimate similar to \cite{AdaptiveR_EKF}
\begin{equation} \label{eq:update_R2}
    \bar{\bm{\Sigma}}^{(k)}_{i} = (1-\gamma) \bar{\bm{\Sigma}}^{(k-1)}_{i} + \gamma \Big( \bar{\bm{\mathrm{r}}}_{i}^{(k)} \big( \bar{\bm{\mathrm{r}}}_{i}^{(k)} \big)^{\mathsf{T}} \Big)
\end{equation}
where $\gamma \in (0,1)$ is a \textit{forgetting} parameter in adaptively updating the covariance\footnote{We note that a smaller $\gamma$ puts a greater weight on the previous RSSI measurement residual covariance estimate at time $k-1$, thus resulting in less variation in the updated estimate.}. From the approximated estimated covariance in \eqref{eq:V_RSSIresidual}, we can compute the estimated RSSI-based position measurement covariance block
\begin{equation} \label{eq:update_Rbar}
    \bar{\bm{\mathrm{R}}}^{(k)}_{i} = \frac{ \bar{\bm{\Sigma}}^{(k)}_{i} - 2\bar{\bm{\mathrm{Q}}} }{2}
\end{equation}
thus concluding the proof.
\end{proof}

The updated positioning covariance \eqref{eq:update_Rbar} is contained within the updated measurement covariance matrix
\begin{equation} \label{eq:Rpos_covariance}
    \bm{\mathrm{R}}_i^{(k)} = \begin{bmatrix} \bar{\bm{\mathrm{R}}}_i^{(k)} & \bm{0}_{ D \times (N_{\mathrm{s}}-D)} \\[4pt] \bm{0}_{ (N_{\mathrm{s}}-D) \times D} & \breve{\bm{\mathrm{R}}} \end{bmatrix}
\end{equation}
for all $N_{\mathrm{s}}$ sensors, where $\breve{\bm{\mathrm{R}}} \in \R^{(N_{\mathrm{s}}-D)\times (N_{\mathrm{s}}-D)}$ is the measurement covariance of the remaining non-position sensors.
	
\subsubsection{Update} We incorporate the correction phase that adaptively updated the position measurement covariance to optimize the state estimate, we then have the updates to:
\begin{align} \label{eq:Kalman}
	\bm{\mathrm{K}}^{(k)}_{i} &= \bm{\mathrm{P}}^{(k|k-1)}_{i} \bar{\bm{\mathrm{C}}}_i^{\mathsf{T}} \big( \bar{\bm{\mathrm{C}}}_i \bm{\mathrm{P}}^{(k|k-1)}_{i} \bar{\bm{\mathrm{C}}}_i^{\mathsf{T}} + \bm{\mathrm{\mathrm{R}}}_i^{(k)} \big)^{-1}, \\
    \label{eq:updated_estimate}
	 \hat{\bm{\mathrm{x}}}^{(k|k)}_{i} &= \hat{\bm{\mathrm{x}}}^{(k|k-1)}_{i} + \bm{\mathrm{K}}^{(k)}_{i} \big( \bar{\bm{\mathrm{y}}}^{(k)}_{i} - \bar{\bm{\mathrm{C}}}_i\hat{\bm{\mathrm{x}}}^{(k|k-1)}_{i} \big), \\
    \label{eq:updated_covariance}
	 \bm{\mathrm{P}}^{(k|k)}_{i} &= \big( \bm{\mathrm{I}}_n - \bm{\mathrm{K}}^{(k)}_{i} \bar{\bm{\mathrm{C}}}_i \big) \bm{\mathrm{P}}^{(k|k-1)}_{i},
\end{align}
the Kalman gain, state estimate, and estimation covariance.

% Next, we approximate the RSSI-based position measurements $\bar{\bm{y}}_{i,[1:D]}^{(k)}$ as Gaussian distributed random vector with unknown covariance $\bar{\bm{R}}_i \in \R^{D \times D}$ centered over the true position $\bm{x}_i^{(k)}$ of the compromised agent $i$ (i.e., $\bar{\bm{y}}_{i,[1:D]}^{(k)} \approx \mathcal{N}(\bm{x}_i^{(k)}, \bar{\bm{R}}_i) $).

% \begin{equation} \label{eq:update_R}
%     \widetilde{\bm{R}}^{(k)}_{i} = \E \Big[ \widetilde{\bm{r}}_{i}^{(k)} \big( \widetilde{\bm{r}}_{i}^{(k)} \big)^{\mathsf{T}} \Big] + \widetilde{\bm{C}}_{[1:D]} \bm{P}_{i}^{(k|k-1)} \widetilde{\bm{C}}_{[1:D]}^{\mathsf{T}},
% \end{equation}
\section{Results} \label{sec:results}

Our approach is validated with MATLAB simulations on swarms of $N_{\mathrm{a}}=12$ mobile agents modeled with double integrator dynamics satisfying \eqref{eq:dynamical_model}. The MAS performs a go-to-goal operation within a 2-dimensional plane (i.e., $D=2$) while experiencing cyber attacks and faults to on-board positioning sensors. As a case study, we employ a virtual spring-damper mesh \cite{Paul_TRO} for decentralized proximity-based formation control to validate the RSSI-based localization framework for resilient formation control. For all simulations, the agents begin with randomized initial positions and are tasked to perform a go-to-goal operation while maintaining $l^{\mathrm{des}} = 8$m distance from any neighboring agent. Additionally, at time $k=350$, five agents are randomly chosen to suffer from a cyber attack and two experience sensor faults to on-board positioning sensors. The communication model has a path loss exponent $\beta = 2$ and shadowing noise of $\Lambda = \mathcal{N}(0,2)$, while the forgetting parameter used is $\gamma = 0.01$ for measurement covariance estimation.

\begin{figure}[tb!]
\vspace{-1pt}
\hspace{-3pt} \subfloat[\label{fig:first_sim} ]{\setlength{\fboxsep}{0pt}\fbox{\includegraphics[width = 0.238\textwidth]{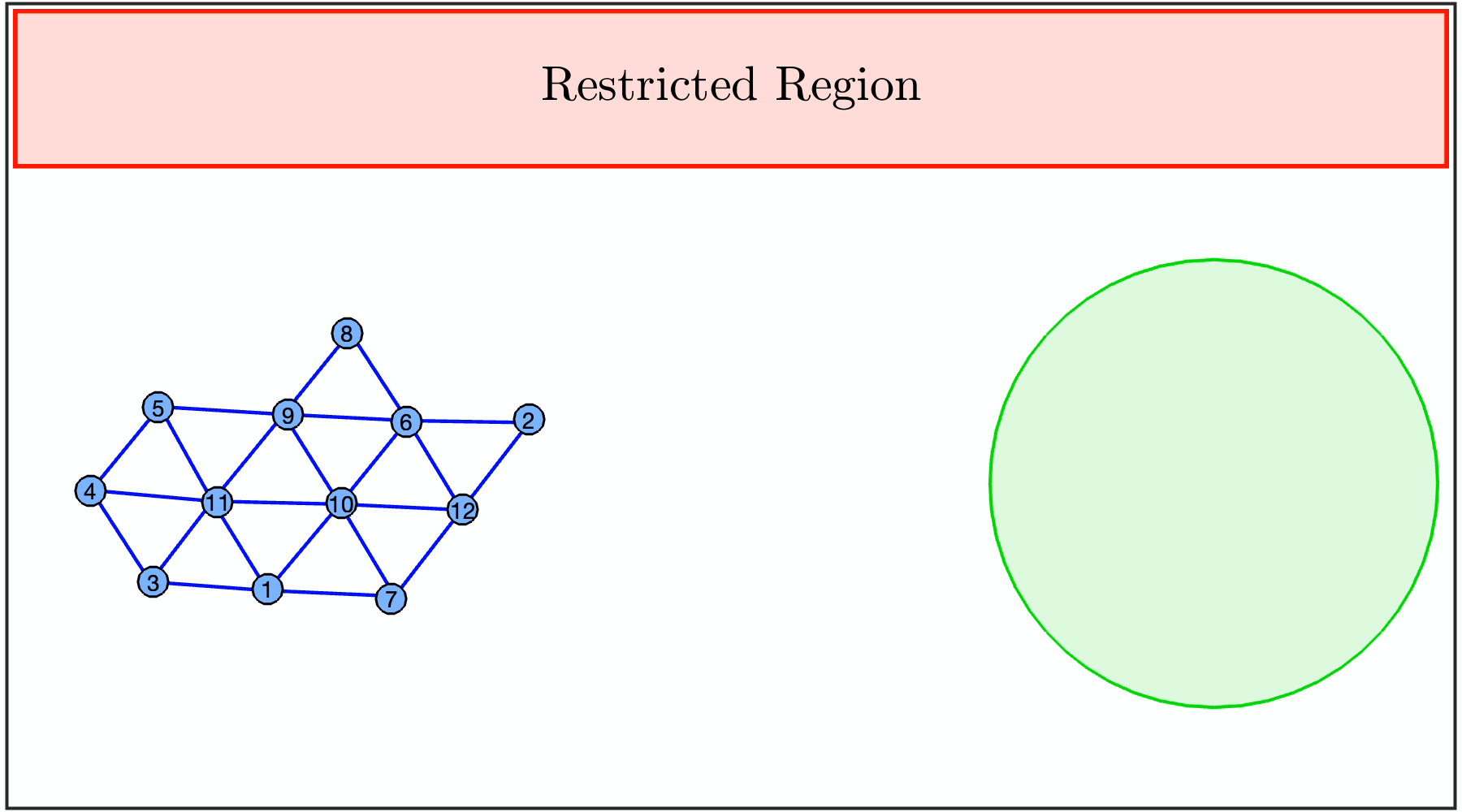}}}
\hspace{-1pt} \subfloat[\label{fig:second_sim} ]{\setlength{\fboxsep}{0pt}\fbox{\includegraphics[width = 0.238\textwidth]{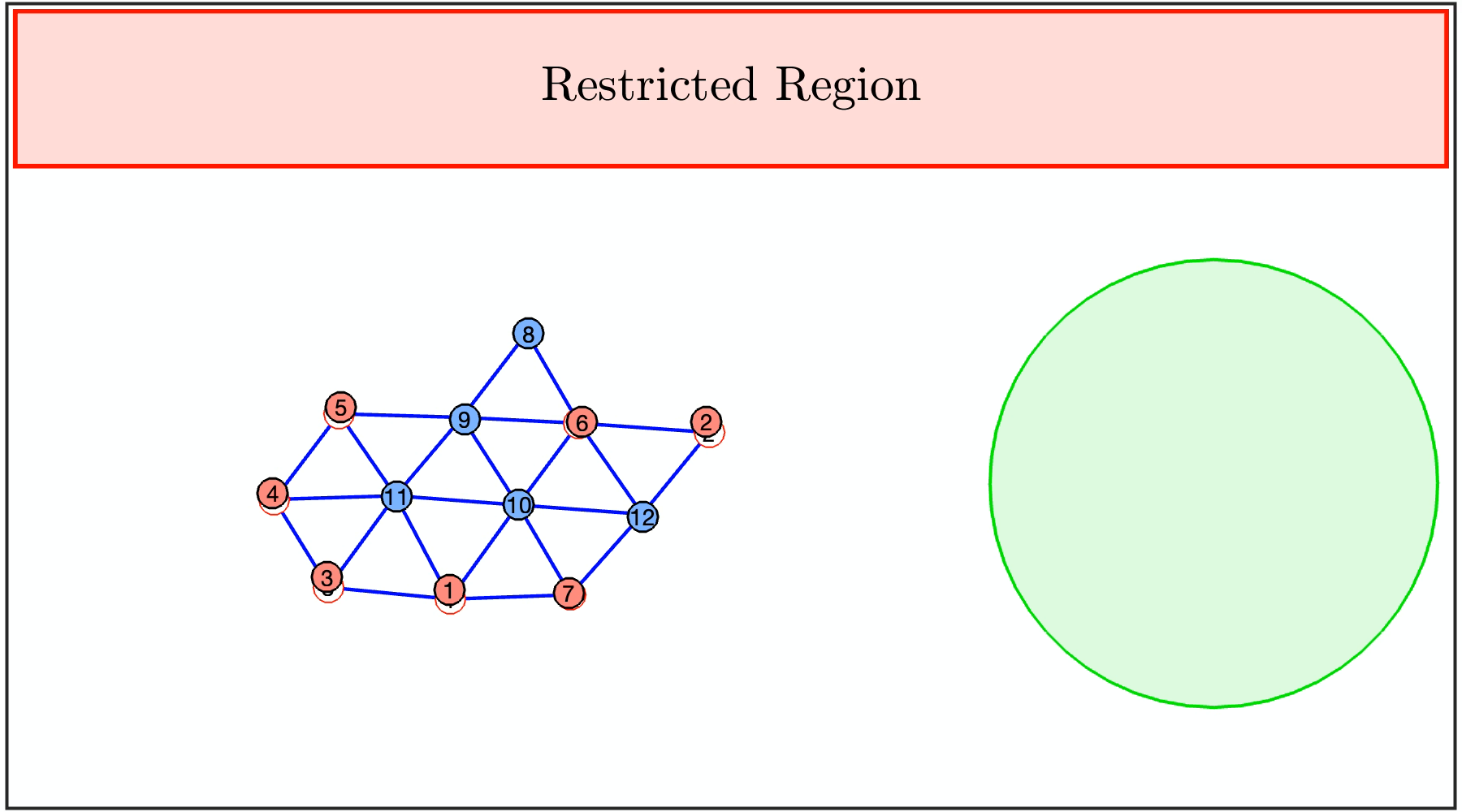}}} \\[3.5pt]
\hspace{-1pt} \subfloat[\label{fig:third_sim} ]{\setlength{\fboxsep}{0pt}\fbox{\includegraphics[width = 0.238\textwidth]{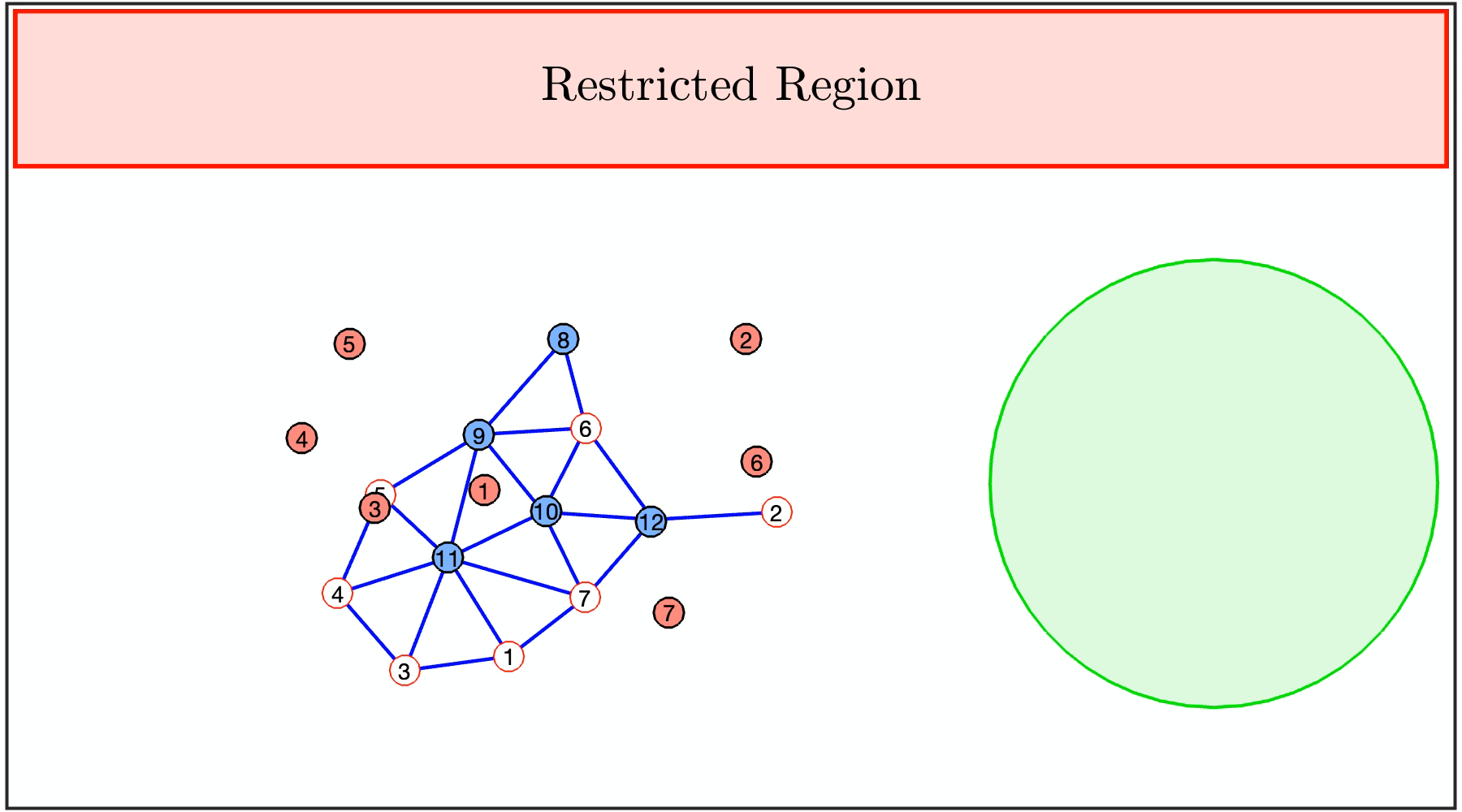}}}
\hspace{-1pt} \subfloat[\label{fig:fourth_sim} ]{\setlength{\fboxsep}{0pt}\fbox{\includegraphics[width = 0.238\textwidth]{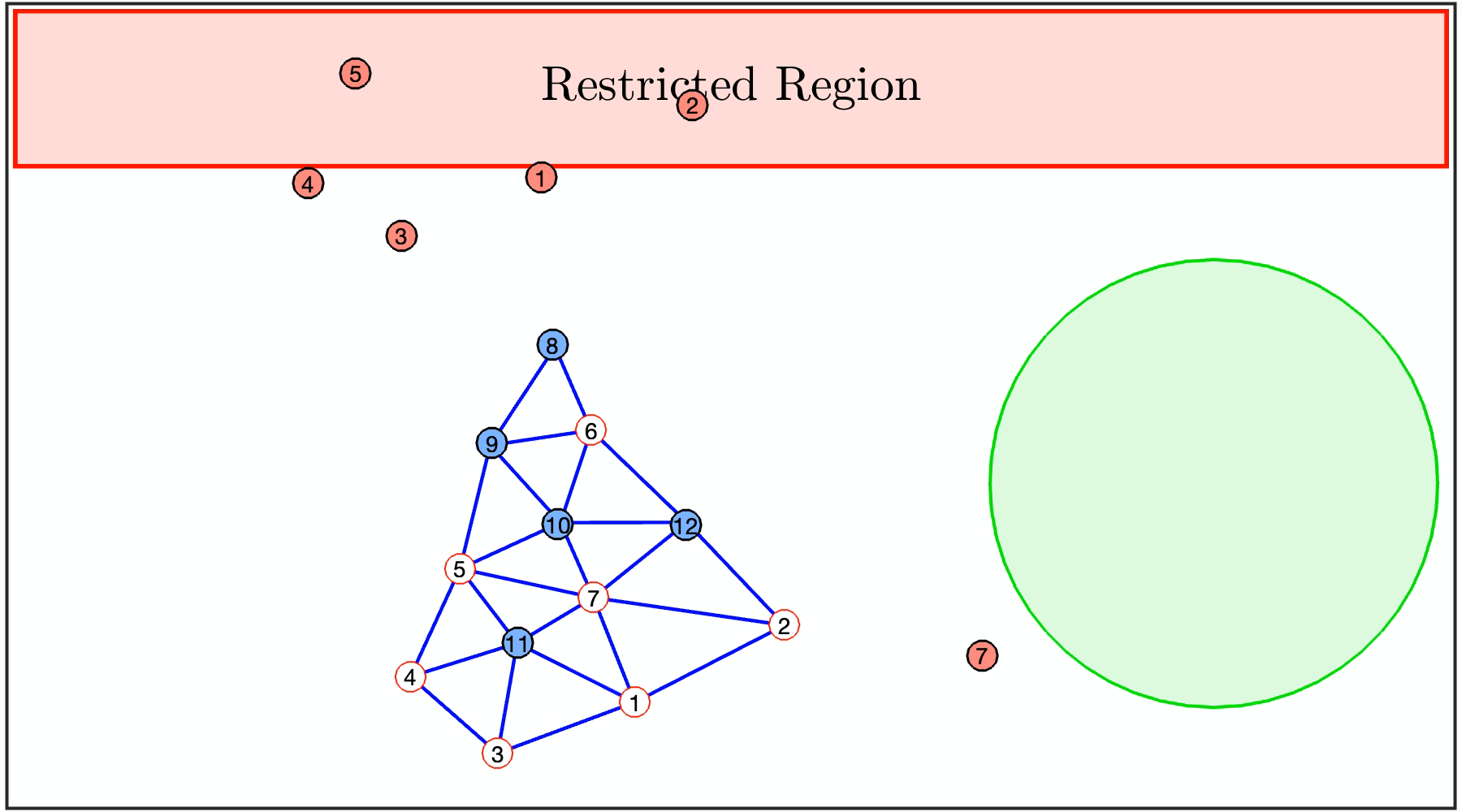}}}
\vspace{-1pt}
\caption{An unprotected system of $N_{\mathrm{a}} = 12$ agents compromised by cyber attacks and faults to on-board position sensors on seven agents (red disks) resulting in them being diverted away from the goal (green region).}
\label{fig:Simulation}
\vspace{-4pt}
\end{figure}

\begin{figure}[tb!]
\hspace{-3pt} \subfloat[\label{fig:first_sim2} ]{\setlength{\fboxsep}{0pt}\fbox{\includegraphics[width = 0.238\textwidth]{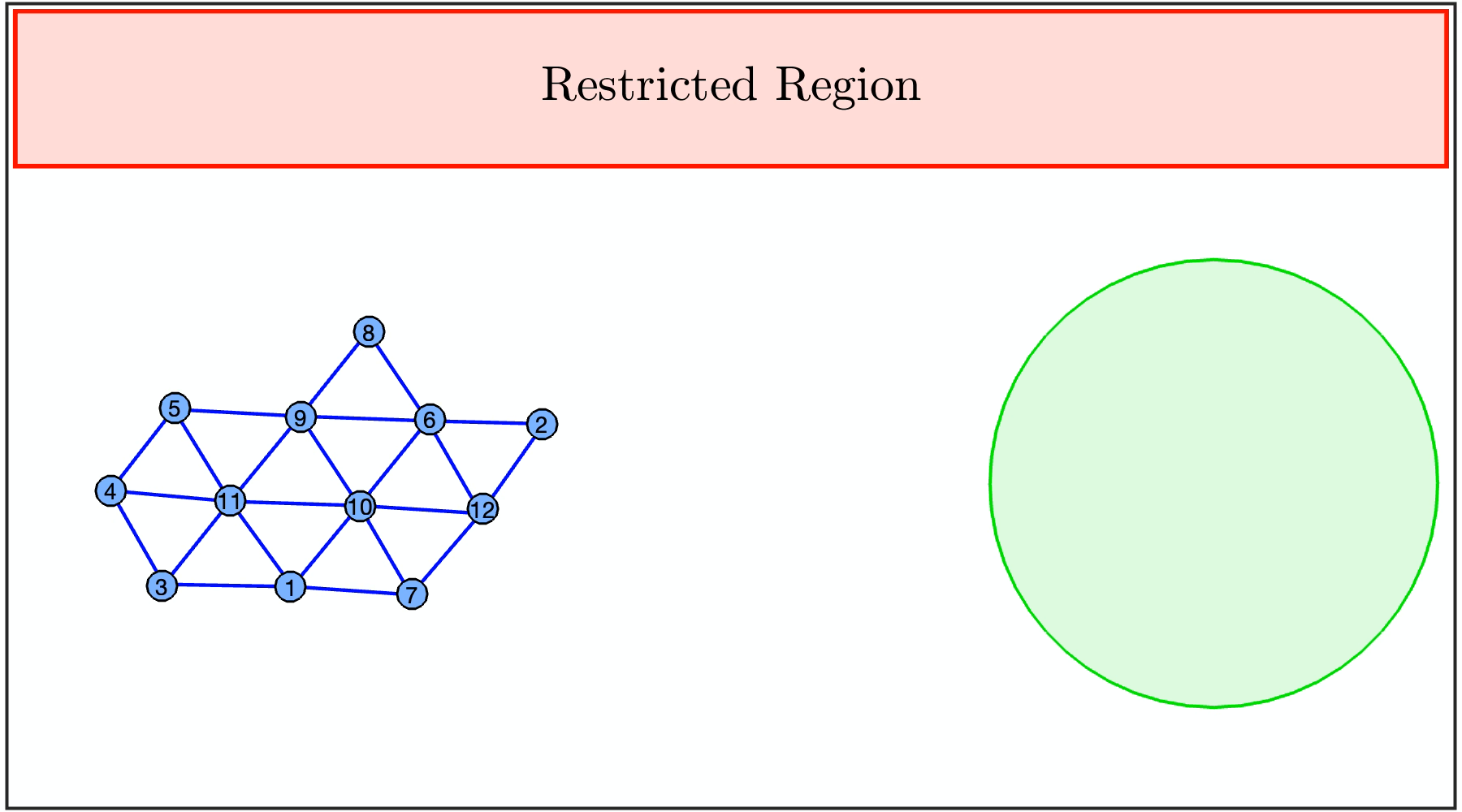}}}
\hspace{-1pt} \subfloat[\label{fig:second_sim2} ]{\setlength{\fboxsep}{0pt}\fbox{\includegraphics[width = 0.238\textwidth]{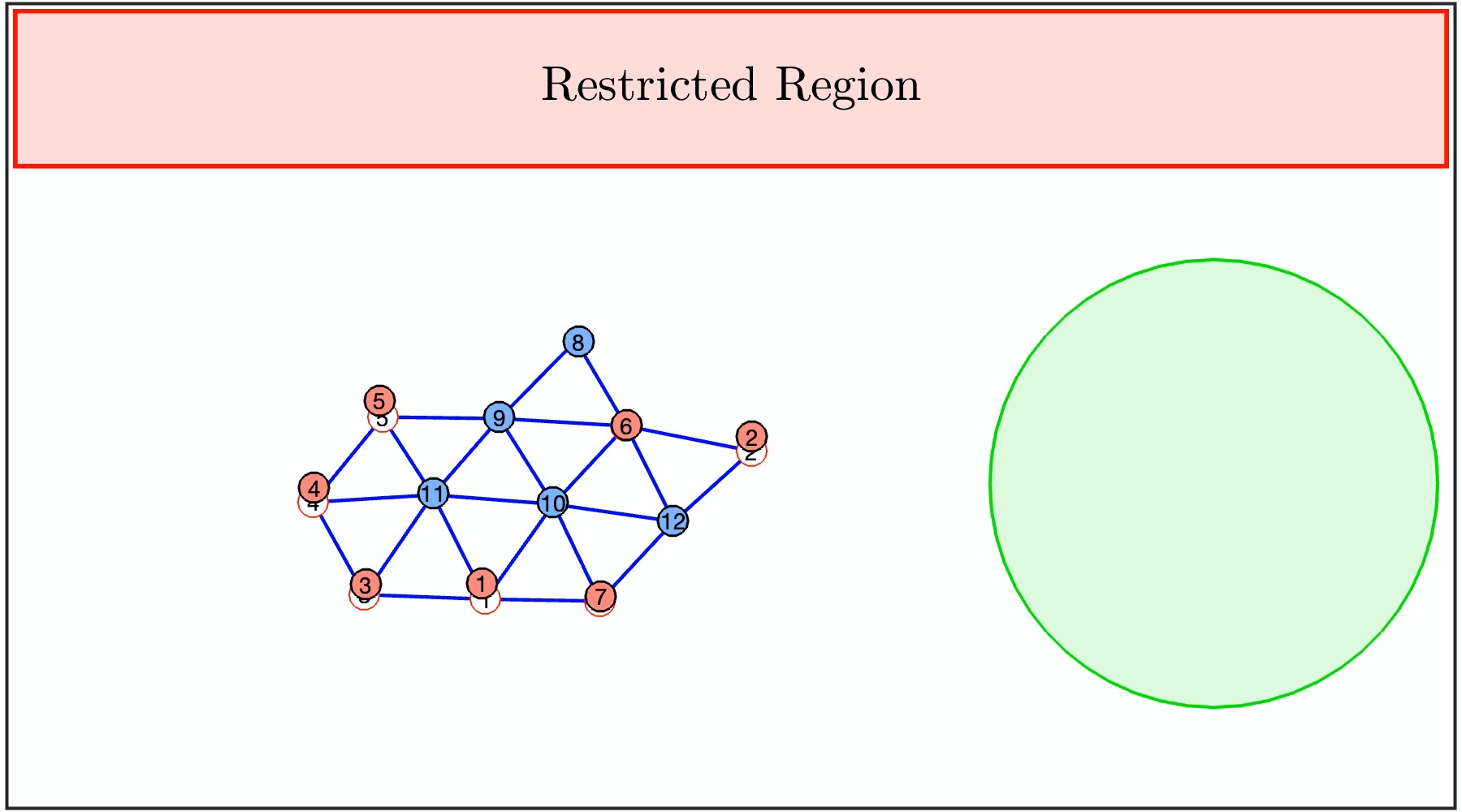}}} \\[3.5pt]
\hspace{-1pt} \subfloat[\label{fig:third_sim2} ]{\setlength{\fboxsep}{0pt}\fbox{\includegraphics[width = 0.238\textwidth]{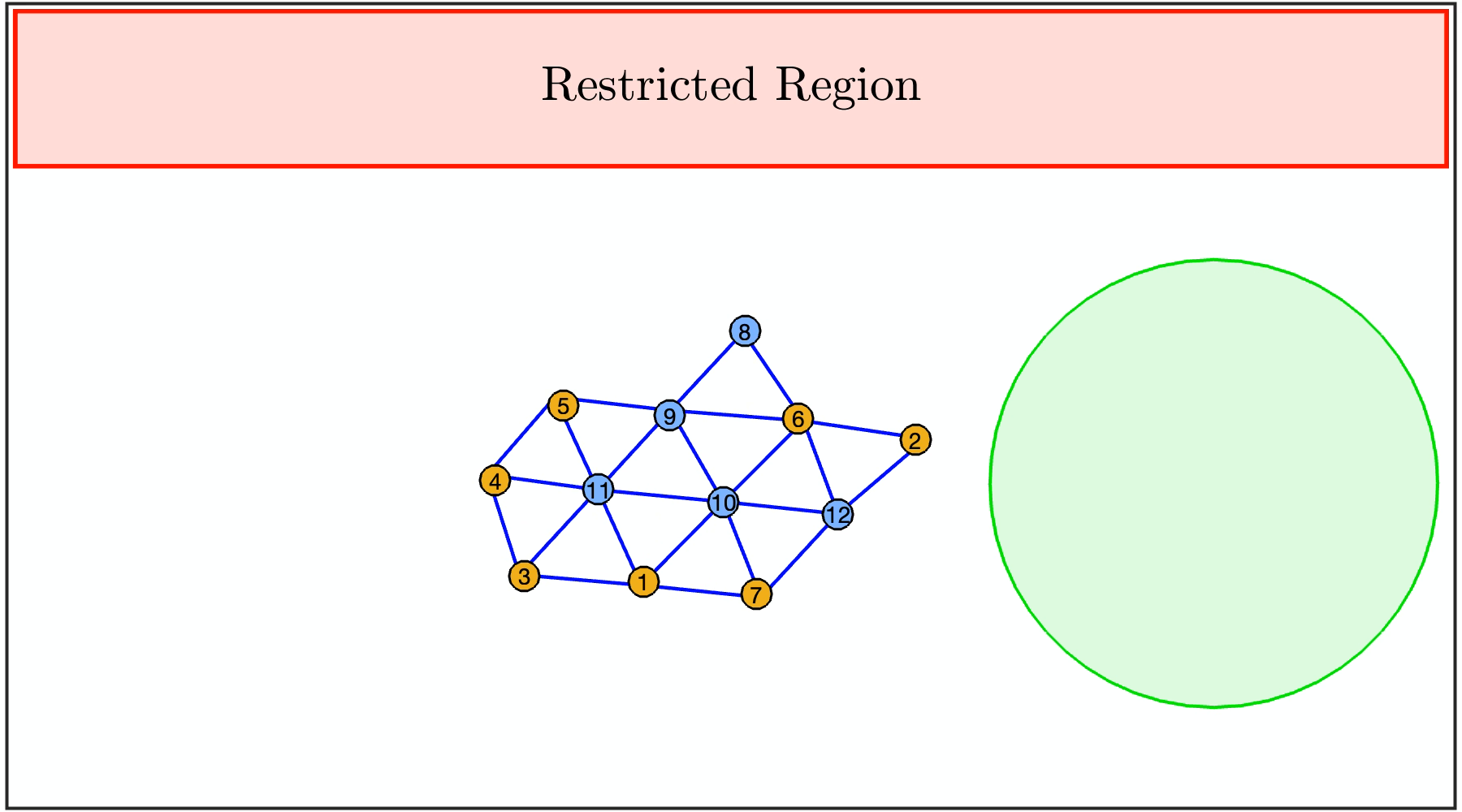}}}
\hspace{-1pt} \subfloat[\label{fig:fourth_sim2} ]{\setlength{\fboxsep}{0pt}\fbox{\includegraphics[width = 0.238\textwidth]{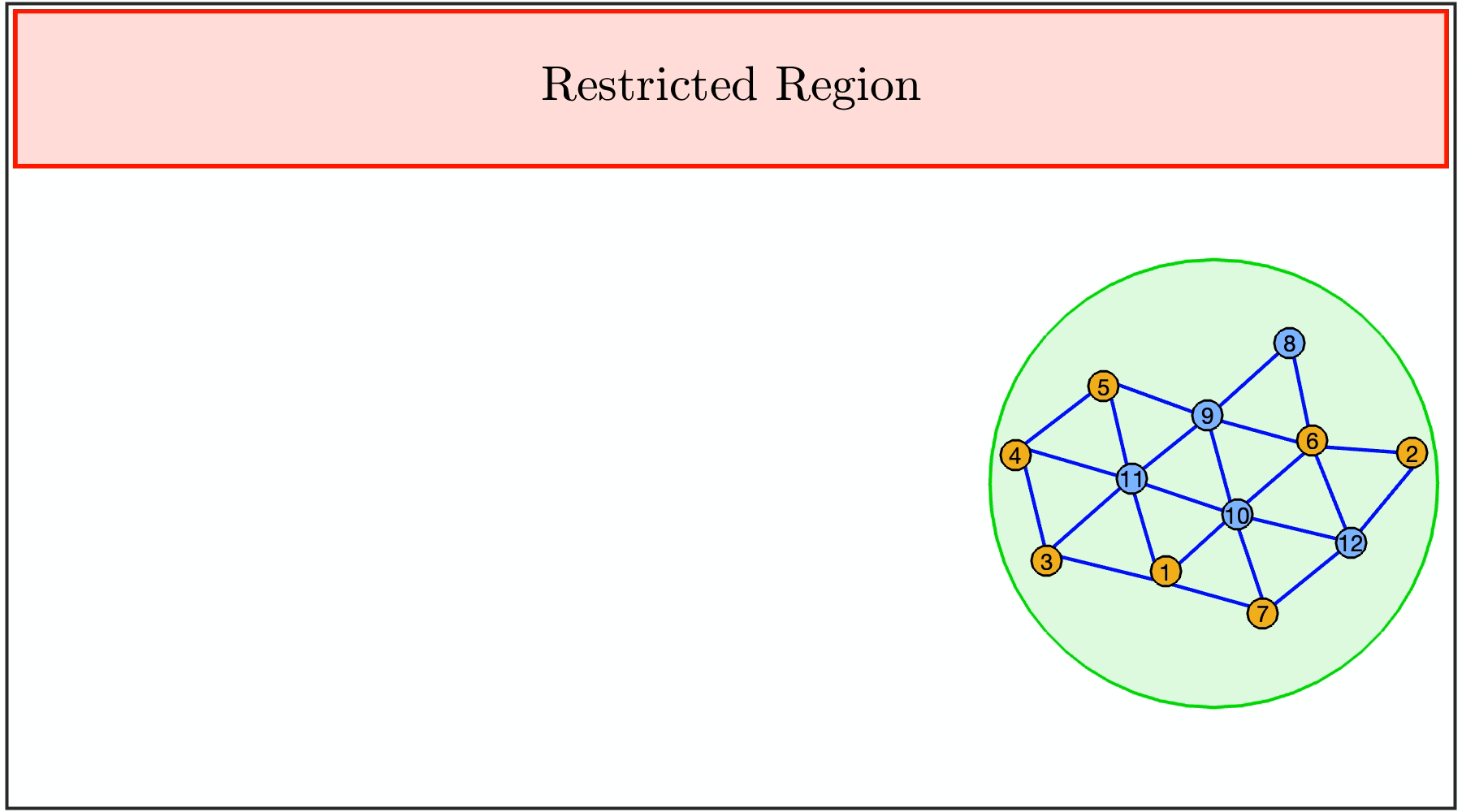}}}
\vspace{-1pt}
\caption{An MAS leveraging our framework resiliently navigates to the desired goal point while experiencing cyber attacks and faults to on-board position sensors (recovered agents in yellow).}
\label{fig:Simulation2}
\vspace{-12pt}
\end{figure}

In the first simulation presented in Fig.~\ref{fig:Simulation}, we show a sequence of snapshots of an unprotected multi-agent system navigating toward a goal region (green disk). The implemented cyber attacks occur simultaneously on the five compromised agents with the intent of diverting their true positions toward the undesired region (red region). Beginning in Fig.~\ref{fig:first_sim}, all agents are uncompromised (blue disks) and perform in a nominal manner; however in Fig.~\ref{fig:second_sim}, the compromised agents are subject to malicious cyber attacks and faults to their positioning sensors. In Figs.~\ref{fig:third_sim}-\subref{fig:fourth_sim}, the true states of the compromised agents (red disks) are driven to undesired regions in the environment, while the remaining uncompromised agents (blue disks) along with the corresponding unreliable (i.e., incorrect) position estimates of the compromised agents (empty disks) continue navigating toward the goal.
In Fig.~\ref{fig:Simulation2} we perform the same simulation as in Fig.~\ref{fig:Simulation}, but this time the MAS is utilizing our framework for resiliency. As shown in the series of snapshots, all seven agents are able to: 1) detect the anomalous on-board positioning sensor behavior and 2) perform sensor reconfiguration to RSSI-based position measurements to resiliently maintain desired performance within the MAS such that all agents safely reach the desired goal.

We provide a comparison between various MAS scenarios by showing the true proximity error between neighboring agents $(i,j) \hspace{-.9pt} \in \hspace{-.9pt}  \mathcal{E}$. We highlight the scenarios, which include: 1) no detection and recovery, 2) a non-robust noise covariance update method of $\bm{\mathrm{R}}_i^{(k)}$ in \cite{AdaptiveR_EKF}, 3) detection and recovery without updating $\bm{\mathrm{R}}_i^{(k)}$, and 4) our proposed robust method for updating $\bm{\mathrm{R}}_i^{(k)}$. Fig.~\ref{fig:FormationError} shows the average formation proximity error over $400$ simulations with randomized initial positions for each scenario. At each time step $k$, the formation proximity error is computed as $\mathrm{E}^{(k)} =  \frac{1}{|\mathcal{E}|} \sum_{\forall (i,j) \in \mathcal{E}} \big| \| \bm{\mathrm{p}}_i^{(k)} - \bm{\mathrm{p}}_j^{(k)} \| - l^{\text{des}} \big| $. In Table~\ref{table:PositionError}, the results of position estimation error $\bm{\mathrm{e}}_{i,x/y}$ in the 2-dimensional coordinate frame are presented for any compromised agent $i$. Our proposed method to adaptively update the unknown RSSI-based position measurement covariance matrix reduces estimation error for more desirable control performance within the swarm.

\begin{figure}[tb!]
\centering
\includegraphics[width = 0.485\textwidth]{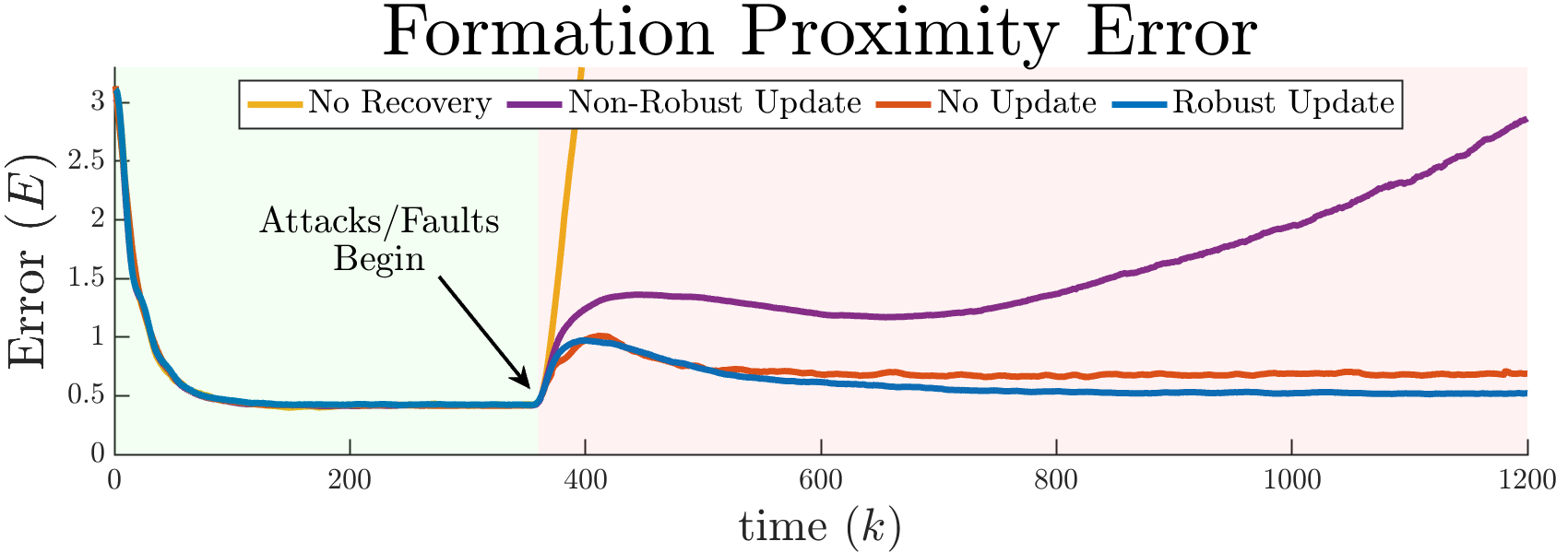}
\vspace{-13pt}
\caption{A comparison of inter-agent proximity error within the formation.}
\label{fig:FormationError}
\vspace{-10pt}
\end{figure}

%\vspace{-5pt}
%\small
\begin{table}[htb!]
\caption{Position Estimation Error.}
\vspace{-4pt}
\centering
%\begin{tabular}{ p{2.1cm}||p{1.8cm}|p{1.8cm} }
\begin{tabular}{ p{2.4cm}||p{2cm}|p{2cm} }
 \rule{0pt}{1.2\normalbaselineskip} & \centering No Update & \hspace*{2pt} Robust Update \\[1pt]
 \hline
 \rule{0pt}{.9\normalbaselineskip} \centering Variance $[\mathrm{e}_{i,x/y}]$ & \centering $0.249/0.253$ & \hspace*{3pt} $0.151/0.144$ \\[.2pt]
% \hline
\end{tabular}
\label{table:PositionError}
%\vspace{-10pt}
\end{table}
\vspace{-4pt}
%\normalsize

\section{Conclusions} \label{sec:conclusion}

This paper provides a decentralized framework for multi-agent systems to resiliently navigate in the presence of cyber attacks and/or faults to on-board positioning sensors within open or unknown environments that lack identifiable landmarks (i.e., anchors) and also operate beyond distance/range sensing of other nearby agents. Upon detection of anomalous sensor behavior, an agent performs sensor reconfiguration to leverage RSSI-based measurements from the nearby agents (i.e., mobile landmarks) as a replacement for the original position sensor. An adaptive Kalman Filtering method accommodates the updated position sensor by estimating its unknown measurement covariance to reduce estimation error for improved control performance within the swarm. From here, future work includes improving the robustness of our framework when the assumed path loss model does not hold due to instability of RSSI signals, for example because of the presence of cluttered environments that create multi-path behavior in the communication broadcasts.

% In our future work, we plan to investigate improving the robustness of our framework when the assumed path loss model does not hold due to instability of RSSI signals for example because of the presence of cluttered environments that create multi-path behavior in the communication broadcasts.

\section*{Acknowledgements}

This work is based on research sponsored by the National Science Foundation under grants 1816591 and 1916760. 

\bibliographystyle{IEEEtran}
\linespread{.99}\selectfont
\bibliography{bibliography.bib}

\end{document}